\newcommand{\A}{\mathcal{A}}
\newcommand{\Z}{\mathbb{Z}}
\newcommand{\M}{\mathcal{M}}
\newcommand{\F}{\mathcal{F}}
\newcommand{\N}{\mathbb{N}}
\newcommand{\cn}{\blacksquare}
\newcommand{\ed}{\overset{\Delta}{=}}
\newcommand{\lf}{\lfloor}
\newcommand{\rf}{\rfloor}
\newcommand{\moves}[1]{\xrightarrow{#1}}
\newcommand{\setn}{\llbracket -n, n \rrbracket}
\newcommand{\Se}{\mathcal{S}}
\newcommand{\dem}{\frac{1}{2}}
\newcommand{\tiers}{\frac{1}{3}}
\newcommand\pass{\textrm{pass}}
\newcommand\dgame[1]{\textsc{dgame}(#1)}
\newcommand\domino[1]{\textsc{domino}(#1)}
\newcommand\nagame{\textsc{nadgame}}
\DeclareMathOperator\supp{supp}
\DeclareMathOperator{\ram}{\rightarrow}
\DeclareMathOperator{\lam}{\leftarrow}
\newcommand{\sq}[2]{
	\draw (#1,#2) -- (#1+1,#2) -- (#1+1,#2+1) -- (#1,#2+1) -- cycle;
	}
\newcommand{\sqc}[3]{
	\sq{#1}{#2}
	\fill[#3] (#1,#2) -- (#1+1,#2) -- (#1+1,#2+1) -- (#1,#2+1) -- cycle;
	}
\newcommand{\tl}[4]{
	\sq{#1}{#2}

	\draw (#1,#2+0.5) -- (#1+1,#2+0.5);

	\fill[#4] (#1,#2+0.5) -- (#1+1,#2+0.5) -- (#1+1,#2+1) -- (#1,#2+1) -- cycle;
	\fill[#3] (#1,#2) -- (#1+1,#2) -- (#1+1,#2+0.5) -- (#1,#2+0.5) -- cycle;

	\draw (#1+0.5,#2+0.75) node {$\leftarrow$};
	\draw (#1+0.5,#2+0.25) node {.};
	}
\newcommand{\tr}[4]{
	\sq{#1}{#2}

	\draw (#1,#2+0.5) -- (#1+1,#2+0.5);

	\fill[#4] (#1,#2+0.5) -- (#1+1,#2+0.5) -- (#1+1,#2+1) -- (#1,#2+1) -- cycle;
	\fill[#3] (#1,#2) -- (#1+1,#2) -- (#1+1,#2+0.5) -- (#1,#2+0.5) -- cycle;

	\draw (#1+0.5,#2+0.75) node {$\rightarrow$};
	\draw (#1+0.5,#2+0.25) node {.};
	}
\newcommand{\decproblem}[2]{\begin{description}\item[Input:]#1\item[Question:]#2\end{description}}
\begin{document}

\title{Two-player Domino games}

\author{Benjamin Hellouin de Menibus\inst{1}\orcidID{0000-0001-5194-929X} \and
Rémi Pallen\inst{2}\orcidID{0009-0006-6708-6904}}
\authorrunning{B. Hellouin de Menibus and R. Pallen}
\institute{
Université Paris-Saclay, CNRS, Laboratoire Interdisciplinaire des Sciences du Numérique, 91400, Orsay, France\\\email{hellouin@lisn.fr}\\\url{https://lisn.upsaclay.fr/~hellouin}\and Université Paris-Saclay, ENS Paris-Saclay, 91190, Gif-sur-Yvette, France\\\email{remi.pallen@ens-paris-saclay.fr}}

\maketitle


\begin{abstract}
We introduce a 2-player game played on an infinite grid, initially empty, where each player in turn chooses a vertex and colours it. The first player aims to create some pattern from a target set, while the second player aims to prevent it.

We study the problem of deciding which player wins, and prove that it is undecidable. We also consider a variant where the turn order is not alternating but given by a balanced word, and we characterise the decidable and undecidable cases. 

\keywords{Game  \and Tiling \and Pattern \and Computability \and Symbolic dynamics \and Subshift \and Tic-tac-toe.}
\end{abstract}

\section{Introduction}

We introduce the Domino game which is played on a grid $\mathbb Z^d$, initially empty. Each player, in turn, picks a vertex for $\Z^d$ and a colour from a finite alphabet. The first player $A$ wins if some pattern from a finite target set is created, and the second player $B$ wins if this never happens. In particular, $B$ wins only if the game lasts forever.

Combinatorial games played on grids are extremely common, from chess to go, and this game is strongly related to tic-tac-toe, gomoku, and their variants. Studying such games on infinite grids is also a common topic -- chess on an infinite board \cite{brum,evans}, to give just an example -- and brings specific computational and game-theoretical challenges, such as deciding whether a player has a strategy to win in finitely many moves. Even for relatively simple cases, such as tic-tac-toe / gomoku where the target pattern consists of $n$ crosses in a row, it is known that $A$ wins for $n=5$ and loses for $n=8$ \cite{gomoku,acc} on an infinite grid, the intermediate cases being well-known open questions.

This game is also motivated by symbolic dynamics: it is a two-player version of the classical \emph{Domino problem} that consists in deciding whether it is possible to colour an infinite grid $\Z^d$ while avoiding a given set of patterns\footnote{The name "Domino game" has sometimes been used for the one-player version.}. This problem is known to be undecidable \cite{berger}, so the two-player version was expected to be as well, but this game-theoretical perspective provides new questions to explore.

Several similar games have been studied, often under names such as "Domino game" or "tiling game": \cite{chlebus} is a seminal paper for tiling games in finite grids (see \cite{complexity} for a survey), and \cite{salo} and follow-up papers for infinite grids. The main specificity of our variant is that players are not forced to play at a specific position at each turn.

In Section~\ref{sec:complexity}, we prove that the Domino game problem, which consists in deciding whether $A$ has a winning strategy, is recursively enumerable-complete on infinite grids for $d\geq 2$, and in particular undecidable. We also show that, if $A$ wins, then they have a strategy to win in bounded time (which is not the case in infinite chess, for example). In Section \ref{sec:boundedtime}, we prove that a bounded-time variant is decidable. In Section~\ref{sec:nonalternating}, we consider a variant where the turn order is given by a word on $\{A,B\}$. For a given game, the set of turn order words where $A$ wins is a subshift, similar to the \emph{winning shift} in \cite{salo}. Our main result is a characterisation of which balanced turn orders make the Domino game problem decidable, often because one player always wins. We conclude with some additional remarks and open questions.
Our undecidability proofs proceed by reduction to the classical Domino problem.

Those results shed new light on why it is so difficult to determine the winner for some concrete games, such as 6-in-a-row and 7-in-a-row tic-tac-toe.

\section{Preliminaries}
\subsection{Subshifts}

Let $\A$ be a finite set of colours called \emph{alphabet}. A \emph{configuration} on $\mathbb{Z}^d$ for $d>0$ is an element $x \in \A^{\Z^d}$. A \emph{cell} is an element $i\in \Z^d$ and a \emph{tile} is a coloured cell $t\in \Z^d \times \A$. A \emph{pattern} $p=(S,f)$ is given by a subset $S \subseteq \Z^d$ called the \emph{support}, also denoted $\supp p$, and a colouring $f \in \A^S$; equivalently, it is a disjoint set of tiles. The pattern is finite if $S$ is finite, and $\emptyset$ denotes the empty pattern. A configuration $x$ is also a pattern $(\Z^d, x)$. Given a pattern $p=(S,f)$ and $i \in S$, denote $p_i = f(i)$. The pattern $p'=(S',f')$ is a subpattern of $p$ if $S'\subset S$ and $f|_{S'} = f'$.


For a set of finite patterns $\F$, we define 
\[X_{\F} \ed \{x \in \A^{\Z^d} \ : \ \forall p = (S,f) \in \F, \forall i\in\Z^d, \ (x_{|i+s})_{s \in S} \neq p\}\]
the set of configurations where no pattern from $\F$ appears. Such a set is called a \emph{subshift}; if $\F$ is finite it is a \emph{subshift of finite type (SFT)}. 
Patterns which have no subpattern in $\F$ are called \emph{admissible}. 

Let $d>0$ and $E\subset \Z^d$. The \emph{Domino problem} on $E$, $\domino{E}$, is:
\decproblem{A SFT $(\A, \F)$ on $\Z^d$.}{Is there an admissible pattern $p$ with $\supp(p)=E$? In other words, is it possible to colour $E$ without creating a pattern in $\F$?}

\subsection{Computability}


A decision problem, such as the Domino problem above, is a function $I \to \{0,1\}$ where $I$ (the input set) is countable. The following definitions depend on a choice of encoding $I\to \N$ for different input sets; since any reasonable encoding gives the same classes, we do not give explicit encodings in the paper. A decision problem $R$ is in:

\begin{itemize}
	\item $\Pi_{1}^0$ if there exists a decidable problem $S$ such that $R(x)\Leftrightarrow \forall y \ S(x,y)$.
    \item $\Sigma_{1}^0$ if there exists a decidable problem $S$ such that $R(x)\Leftrightarrow \exists y \ S(x,y)$.
\end{itemize}

We use \emph{many-one reductions} to compare the computational complexity of decision problems. We denote $Q \leq P$ if there exists a computable function $f:\N \rightarrow \N$ such that $Q(x) \Leftrightarrow P(f(x))$ for all inputs $x$. A problem $P$ is \emph{$C$-hard} for a class $C$ if $Q \leq P$ for all $Q\in C$. $P$ is \emph{$C$-complete} if $P$ is $C$-hard and $P \in C$.

For example, $\Sigma^0_1$ is the class of recursively enumerable problems and the Domino problem on $\Z^d$, $d>1$, is known to be $\Pi_1^0$-complete.

\subsection{The Domino game}

Given $d>0$, a subset $E\subseteq \Z^d$, an alphabet $\A$ and a finite set of finite patterns $\F$, we define the two-player \emph{Domino game} $\Gamma(\A,\F,E)$.

The two players are denoted $A$ and $B$. The state of the game at each turn, called \emph{position}, is given by a pattern $p$ with $\supp(p)\subset E$ together with a letter $\rho \in \{A, B\}$ indicating whose turn it is to play. In a given position $\alpha = (p,\rho)$, the current player $\rho$ must play a \emph{move} $m$. A move is either a pass (denoted $m = \pass$) or a choice of a cell $i\in E\backslash\supp p$ and a colour $a\in \A$ (denoted $m = (i,a)$). The new position is $\alpha' = (p', \overline\rho)$ where:
\begin{itemize}
\item If $m = \pass$ : $p' = p$.
\item If $m = (i,a)$ : $\supp{p'} = \supp{p}\cup\{i\}$, $p'_i=a$ and $p' = p$ on all other cells.
\item $\overline A = B$ and $\overline B = A$ (alternate play).
\end{itemize}

We write $p \moves{m} p'$ when a move $m$ changes a pattern $p$ to a pattern $p'$.

A game starts from the position $\alpha_0 = (\emptyset, A)$, that is, every cell is uncoloured and $A$ starts. A position $(p,\rho)$ where some pattern from $\F$ appears in $p$ is called \emph{final}: the game ends and $A$ wins. $B$ wins if a final position never occurs. Therefore a \emph{game} of length $\ell \in \N^{\ast}\cup\{\infty\}$ is a sequence of patterns $(p_t)_{t < \ell}$ such that:
\begin{itemize}
\item for all $t < \ell-1$, there is a move $m_t$ such that $p_t \moves{m_t} p_{t+1}$;
\item if $t < \ell-1$, $p_t$ is not final;
\item if $\ell < \infty$, either $p_{\ell-1}$ is final ($A$ wins) or $\supp(p) = E$.
\end{itemize}

Notice that, if $E$ is infinite, then $B$ wins if and only if the game never ends.

\subsection{Game theory}

Define inductively a position $(p,\rho)$ to be \emph{winning for} $A$ (with value $v(p, \rho)$, which is an ordinal number) if:
\begin{itemize}
\item it is a final position (and $v(p,\rho) = 0$), or
\item $\rho = A$, and there is a move $p\moves{m}p'$ with $(p', B)$ winning for $A$ (and $v(p, A) = \min v(p', B) + 1$, taken over all such moves), or
\item $\rho = B$, and for all moves $p\moves{m}p'$, $(p', A)$ is winning for $A$ (and $v(p, B) = \sup v(p', A) + 1$, taken over all possible moves).
\end{itemize}

See \cite{evans} for more details on game values.


A \emph{winning position for} $B$ is a position which is not winning for $A$. The game is  \emph{winning for} $\rho \in \{A, B\}$ if the \emph{initial position} $(\emptyset, A)$ is winning for $\rho$. If the game is winning for $A$, the value of the game is the value of the initial position which may be infinite (as is the case with chess on an infinite grid \cite{evans,brum}).

A \emph{strategy} is a partial function\footnote{A strategy does not need to be defined on unreachable positions, e.g. infinite patterns.} $\Se: \A^{\mathcal P(E)} \rightarrow \M$, where $\M$ is the set of moves, such that $\Se(p)$ is legal in $p$. We say that player $\rho$ \emph{applies} a strategy $\Se_{\rho}$ during a game $(p_t)_{t < \ell}$ if $p_t \moves{\Se_\rho(p_t)}p_{t+1}$ for every even $t$ (if $\rho = A$), resp. every odd $t$ (if $\rho = B$). A strategy $\Se_{\rho}$ is a \emph{winning strategy} for player $\rho$ if $\rho$ wins any game where $\rho$ applies $\Se_{\rho}$.
It is easy to see that $\rho \in \{A, B\}$ has a winning strategy if and only if the initial position is a winning position for $\rho$.


\section{Complexity of the Domino game problem}\label{sec:complexity}

\begin{definition}[The Domino game problem]
Given $d>0$ and $E \subseteq \Z^d$, the Domino game problem on $E$, denoted $\dgame{E}$, is defined as:
\decproblem{A SFT $(\A, \F)$ on $\Z^d$.}{Does $A$ have a winning strategy for the game $\Gamma(\A, \F, E)$?}
\end{definition}

\begin{theorem}
\label{thm:main}
The Domino game problem on $\Z^d$, $d>1$, is $\Sigma^0_1$-complete, and in particular undecidable.
\end{theorem}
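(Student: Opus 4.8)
The plan is to prove the two directions separately: membership in $\Sigma^0_1$ and $\Sigma^0_1$-hardness.

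For membership in $\Sigma^0_1$, I would argue that "$A$ has a winning strategy" is equivalent to "$A$ has a strategy to win in at most $N$ moves" for some $N \in \N$. Given this, the problem becomes recursively enumerable: one enumerates candidate bounds $N$ and candidate strategies for $A$ restricted to the finite game tree of depth $2N$ (since only finitely many cells are ever relevant up to translation — in fact a winning strategy of depth $N$ only ever touches cells within a bounded region, because $A$ only needs finitely many cells to build a forbidden pattern and can ignore $B$'s moves outside that region), and checks that $B$ has no escape in that finite tree. The key lemma here is the bounded-time claim advertised in the introduction: if $A$ wins then $A$ wins in bounded time. I would prove this by a compactness / König's lemma argument on the tree of non-final positions: if $A$ could be forced to play arbitrarily long, then by a suitable finiteness argument (the relevant part of the game tree has finite branching once we quotient by translation and restrict to a bounded window, since $A$'s moves can be assumed confined to such a window) there is an infinite play avoiding final positions, contradicting that $A$ wins. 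This is the step I expect to require the most care, since one must justify that $A$'s winning strategy can be taken to be "finitary."

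For $\Sigma^0_1$-hardness, I would reduce from the complement of the Domino problem $\domino{\Z^d}$, which is $\Pi^0_1$-complete, hence its complement is $\Sigma^0_1$-complete. Given an SFT $(\A, \F)$, I want to build a Domino game $\Gamma(\A', \F', \Z^d)$ such that $A$ wins iff $(\A, \F)$ admits \emph{no} valid configuration, i.e. every configuration of $\A^{\Z^d}$ contains some pattern of $\F$. The idea: let $B$'s moves correspond to "claiming" the colour of cells, trying to build a valid configuration of $(\A,\F)$, while $A$ merely waits and wins exactly when $B$'s partial pattern already contains a forbidden pattern from $\F$. Concretely, take $\A' = \A$, $\F' = \F$, and — here is the subtlety — we must handle the alternation: $A$ also gets to colour cells, which could let $A$ "cheat" by creating a forbidden pattern using cells $B$ never intended. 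The fix is to enlarge the alphabet so that $A$'s moves and $B$'s moves are distinguishable, e.g. $\A' = \A \times \{A, B\}$, let $\F'$ forbid any cell marked $A$ (so $A$ is effectively forced to pass, or loses nothing by passing — and we must also ensure $A$ cannot win by such moves), and then a final pattern is one whose $\A$-projection over $B$-marked cells contains a pattern of $\F$. One checks: $B$ survives forever iff $B$ can colour all of $\Z^d$ (passing cofinitely often is not allowed when $E = \Z^d$ is infinite, so $B$ is forced to eventually colour every cell) with $\A$-colours avoiding $\F$, i.e. iff $(\A,\F)$ has a valid configuration; hence $A$ wins iff $X_\F = \emptyset$.

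The main obstacle I anticipate is making the reduction watertight against $A$'s freedom of play: since $A$ is not forced to a position, one must prevent $A$ from either (i) helping themselves by colouring cells to complete a forbidden pattern that $B$ would never have built, or (ii) being forced by the rules to make the game finite for spurious reasons. The alphabet-tagging trick above, together with the rule forbidding $A$-tagged cells from appearing in any final pattern and a careful treatment of the pass move and the "$\supp(p) = E$" terminal condition, should resolve this, but verifying that $B$'s optimal play genuinely simulates choosing a configuration — and that $A$ gains nothing by tagging cells — is where the bookkeeping lies. I would also need to double-check the easy direction (if $X_\F = \emptyset$ then $A$ wins) uses a compactness argument: since every configuration contains a forbidden pattern, in fact every sufficiently large finite pattern does, so $B$ cannot survive past a bounded number of moves regardless of play, giving $A$ the win — which dovetails nicely with the bounded-time phenomenon established for the first direction.
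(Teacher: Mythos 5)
Your membership argument is in the same spirit as the paper's (a compactness argument showing that if $A$ wins then $A$ wins within some bounded window, reducing to a decidable finite game), though you gloss over the real difficulty: the game tree is infinitely branching because of \emph{$B$'s} moves, not $A$'s, and König's lemma does not apply until you explain why $B$'s infinitely many possible replies collapse to finitely many relevant ones. The paper handles this by treating $\pass$ as a point at infinity (one-point compactification of the move space) and extracting a limit of $B$'s strongly winning strategies on the finite windows $\setn^d$; some such device is needed to make your sketch rigorous. Also note that, under the paper's rules, $B$ wins an infinite-board game simply by ensuring the game never ends --- $B$ is never ``forced to eventually colour every cell,'' since the terminal condition $\supp(p)=E$ is unreachable in finite time when $E=\Z^d$. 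This misreading already weakens your claimed equivalence in the hardness direction.

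The hardness reduction, however, has a fatal gap. In this game the patterns of $\F'$ are what $A$ \emph{wants} to create, so ``letting $\F'$ forbid any cell marked $A$'' hands $A$ an immediate win: $A$ plays one $A$-tagged tile and the position is final. If instead you mean that $\F'$ contains only patterns supported on $B$-tagged cells, the tagging is still vacuous, because a colour does not record who played it: nothing prevents $A$ from choosing the colour $(a,B)$ and simply laying down a forbidden pattern of $\F$ tile by tile (and conversely, $A$-tagged tiles become a blocking tool that \emph{helps} the player trying to avoid patterns). More fundamentally, with any reduction of the shape ``$B$ builds a configuration, $A$ watches,'' the outcome of $\Gamma(\A,\F,\Z^d)$ is a Maker--Breaker-type combinatorial question that has no a priori relation to whether $X_\F=\emptyset$; this is exactly why tic-tac-toe variants are hard. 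The paper's reduction needs a genuinely different mechanism: the alphabet $\A'=(\A^2\times\{\leftarrow,\rightarrow\})\cup\{\cn\}$, where each non-$\cn$ tile carries its own colour \emph{plus} a colour it donates to a neighbouring cell, and $\F'$ consists of the patterns all of whose ``interpretations'' lie in $\F$. This lets $B$ repair each move of $A$ by assigning it a valid interpretation drawn from a fixed $x\in X_\F$ (so $B$ survives when $X_\F\neq\emptyset$), while the interpretation-free tile $\cn$ forces $B$ to answer locally, letting $A$ build a long uniquely-interpreted block that must contain a forbidden pattern when $X_\F=\emptyset$. That repair/interpretation mechanism is the core of the proof and is absent from your proposal.
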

We prove this result in two parts: Propositions~\ref{prop:Sigma1} and \ref{prop:hard}.

\subsection{Membership}

\begin{proposition}
\label{prop:Sigma1}
For any $d>0$, $\dgame{\Z^d}$ is in $\Sigma_1^0$.
\end{proposition}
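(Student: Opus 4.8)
The plan is to show that a winning strategy for $A$, if it exists, is witnessed by a finite object, so that the problem becomes "there exists a finite witness checkable in finite time", which is the shape of a $\Sigma^0_1$ statement. The key observation is the one the authors announced in the introduction: if $A$ wins, then $A$ wins in bounded time. Concretely, I would argue that the game tree, pruned to only record $A$'s choices (and with $B$'s moves branching), has the property that along any branch where $A$ plays a winning strategy, a final position is reached after finitely many moves; by König's lemma, since each position has only finitely many legal moves — here one must be slightly careful, as a pass is one move but choosing a cell and colour is one of infinitely many moves when $E$ is infinite — the tree of plays consistent with a winning strategy and not yet final is finitely branching only after we bound the relevant region. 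This is the point that needs care.

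So the cleaner route is: first prove a strategy-stealing / locality reduction showing that if $A$ has any winning strategy, then $A$ has a winning strategy in which every move is played within a bounded box $\llbracket -N, N\rrbracket^d$ for some $N$ depending only on $(\A,\F)$, and uses at most some bounded number $T$ of turns. The intuition: patterns in $\F$ have bounded support, so only finitely many "shapes" of progress matter; $A$ should never play far from the action (a move far away can be deferred), and $B$ likewise gains nothing by playing far away except to waste $A$'s time, but passes already model "wasting time" and there are only finitely many meaningful responses up to translation. Once moves are confined to a box of size $N$ and the game lasts at most $T$ turns, the game is finite, the game tree is finite, and one can decide by backward induction whether $A$ wins in that finite game — but we only need semi-decidability: enumerate $N, T \in \N$, and for each pair check (by finite backward induction on the finite truncated game) whether $A$ has a winning strategy confined to the box with horizon $T$; accept if some pair works. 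If $A$ genuinely wins, some $(N,T)$ witnesses it, so the procedure halts; if $A$ does not win, no $(N,T)$ works and the procedure runs forever. This is exactly a $\Sigma^0_1$ algorithm.

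The steps in order: (1) Fix $(\A,\F)$ and let $r$ bound the diameter of supports of patterns in $\F$. (2) Show $B$ can be assumed to never pass and never play "uselessly far" — more precisely, show that if $A$ wins, $A$ wins against the restricted $B$ who only plays adjacent to existing tiles or inside a growing box; even simpler, just directly bound $A$'s winning time: define $V_A$ = the least $T$ such that $A$ has a strategy forcing a final position within $T$ of $A$'s moves, and argue $V_A < \infty$ whenever $A$ wins, using that from any non-final position where $A$ is winning, $A$'s value $v(p,\rho)$ is an ordinal, and — because the game is finitely branching once we observe $A$ need only consider finitely many moves (those near the current finite pattern, up to the finitely many colours) — the value is actually a finite natural number. (3) Deduce a bound $N$ on the region ever used in an optimal play: $A$'s moves are near the pattern, $A$ ignores $B$'s far-away moves (treating them as passes), so all relevant action stays in a box growing linearly with $T$. (4) The truncated game on box $\llbracket -N,N\rrbracket^d$ with horizon $T$ is finite; decide it by backward induction. (5) Dovetail over $(N,T)$; halt on success. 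Output: a semi-algorithm, hence $\dgame{\Z^d} \in \Sigma^0_1$.

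The main obstacle is step (2)–(3): justifying that $A$'s value is a finite integer (not merely a countable ordinal) and that the relevant region is bounded. The subtlety is that $B$ has infinitely many moves available (infinitely many cells), so a priori the game is infinitely branching and König's lemma does not directly apply to conclude finite value. The fix is an indistinguishability argument: from $A$'s perspective, any $B$-move that is far from all currently coloured cells and far from where $A$ intends to build the target pattern is equivalent, up to translation, to one of finitely many "generic far move" responses, and in fact is no better for $B$ than a pass (it cannot block $A$ and cannot itself complete a pattern in $\F$ in isolation unless near existing tiles). Making "far" and "finitely many shapes up to translation" precise — essentially a compactness/pigeonhole argument on the bounded-size neighbourhoods that patterns in $\F$ can see — is the technical heart; everything after that is routine finite game analysis and dovetailing. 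I would expect the authors to package step (2) as the "bounded time" claim promised in the introduction and prove it as a standalone lemma, with this Proposition then following immediately.
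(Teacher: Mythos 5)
Your overall architecture is right --- semi-decide by enumerating finite approximations of the game and checking each one by backward induction --- and your soundness direction (a winning strategy for $A$ confined to a box lifts to $\Z^d$ by treating $B$'s outside moves as passes) is fine. The genuine gap is in your step (2)--(3), exactly where you locate "the technical heart," and the argument you sketch there is circular. To bound the game value you must tame $B$'s infinite branching at each $B$-node (the value there is a $\sup$ over infinitely many moves, which can be $\omega$ even if each individual successor value is finite). Your fix is that a $B$-move "far from where $A$ intends to build" is no better than a pass; but "far" can only be calibrated once you already know how long the game lasts or how large a region $A$'s strategy uses --- which is precisely the bound you are trying to establish. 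This is not a cosmetic issue: the paper's own machinery for identifying far-apart moves (the board decomposition in Section~\ref{sec:boundedtime}) uses distance thresholds of the form $2^{T-t}$ and therefore requires the horizon $T$ to be fixed \emph{in advance}; it proves Theorem~\ref{thm:boundedtime}, not Proposition~\ref{prop:Sigma1}. Without an independent source for the bound, König's lemma never gets off the ground.

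The paper escapes this by arguing the contrapositive from $B$'s side (Lemma~\ref{lem:compacity}): if $B$ wins every finite game $\Gamma(\A,\F,\setn^d)$, then $B$ has a \emph{strongly winning} strategy $\Se_n$ on each box, and one extracts a limit point $\Se_\infty$ of the sequence $(\Se_n)$ --- compactifying the move space by declaring $\pass$ to be the point at infinity --- which is shown to be winning for $B$ on all of $\Z^d$. Any finite prefix of a play against $\Se_\infty$ stays in some box $\setn^d$, where $\Se_\infty$ agrees (after replacing far moves by passes) with some strongly winning $\Se_k$, $k \ge n$, so no final position is ever reached. This yields "$A$ wins on $\Z^d$ iff $A$ wins on some $\setn^d$" without ever bounding the value of $A$'s strategy a priori; the finite game value and the confinement of $A$'s play to a box (your steps (2)--(3)) then come out as a \emph{corollary} rather than serving as the engine of the proof. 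If you want to salvage your direct route, you would need to first prove the box-confinement statement by some such compactness argument anyway, so I recommend restructuring your proof around a limit of $B$'s strategies.
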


This follows from:

\begin{lemma}
\label{lem:compacity}
Let $(\A, \F)$ be a SFT on $\Z^d$. $B$ wins the game $\Gamma(\A, \F, \Z^d)$ if and only if $B$ wins the game $\Gamma(\A, \F, \setn^d)$ for all $n \in \N^{\ast}$.\end{lemma}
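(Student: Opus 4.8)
The plan is to prove the two implications separately, with the substantial content lying in the direction ``$B$ wins every finite game $\implies$ $B$ wins the infinite game.'' The easy direction is the contrapositive of that, together with its converse: if $B$ wins $\Gamma(\A,\F,\Z^d)$ then $B$ wins $\Gamma(\A,\F,\setn^2)$ for all $n$, because a winning strategy for $B$ on $\Z^d$ restricts to a strategy on the sub-board $\setn^2$ (when the opponent is confined to $\setn^2$, $B$ can simply follow the $\Z^d$-strategy, treating the smaller board as a constrained instance of the larger one; since no pattern of $\F$ ever appears against the $\Z^d$-strategy, none appears here either). Actually it is cleaner to phrase everything through $A$: ``$A$ wins $\Gamma(\A,\F,\Z^d)$ $\iff$ $A$ wins $\Gamma(\A,\F,\setn^2)$ for some $n$.'' The $\Leftarrow$ direction of this is the genuinely easy one: a winning strategy for $A$ on $\setn^2$ is in particular a winning strategy on $\Z^d$, since $A$ can ignore the extra cells (and if $B$ ever plays outside $\setn^2$, $A$ just pretends $B$ passed — formally one checks that the restriction of any $\Z^d$-play to moves inside $\setn^2$, interleaved with $A$'s responses, is a legal $\setn^2$-play, so $A$'s finite-board strategy still forces a final position).

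For the hard direction, $\Rightarrow$: suppose $A$ wins $\Gamma(\A,\F,\Z^d)$. By the game-theoretic setup in the Preliminaries, the initial position is winning for $A$ with some ordinal value $v(\emptyset,A)$. The first key observation I would establish is that this value is actually \emph{finite}: since the game is finite-branching only in a weak sense (the move set is countably infinite — one may play any cell — so a priori $v(\emptyset,A)$ could be a countable ordinal), one must argue that from a winning position for $A$ with $\rho=B$, the supremum $\sup v(p',A)+1$ over $B$'s moves is attained, i.e. is finite. The argument is a compactness / König-type argument: if $v(\emptyset,A)$ were infinite, one could extract a sequence of $B$-moves $p'_k$ with $v(p'_k,A)\to\infty$; but $B$'s moves that are "far away" from $\supp(p)\cup\bigcup\supp(\F\text{-patterns already threatened})$ are all essentially equivalent up to translation and do not decrease $A$'s progress, so only finitely many of $B$'s moves can matter, bounding the value. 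This is the step I expect to be the main obstacle — it is exactly the place where the structure of the Domino game (patterns are finite, the alphabet is finite, passing is allowed) is used, and it is essentially the content of the parallel claim in the introduction that ``if $A$ wins then $A$ wins in bounded time.'' In fact I would factor out this boundedness as a separate lemma (or cite the bounded-time result if it is proved independently) and use it as a black box here.

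Given that $v(\emptyset,A)=:N<\infty$, a winning strategy for $A$ forces a final position within $N$ moves regardless of $B$'s play. Now run the strategy tree of depth $\le N$: since each of the $\le N$ moves touches at most one cell, the total number of cells that can possibly be coloured in any play consistent with $A$'s fast strategy is at most $N$; more carefully, $A$'s strategy might a priori want to play far from the origin, but up to a global translation we may assume $A$'s first move is at the origin, and then — because only $\le N$ cells are ever involved and $B$ can delay $A$ by at most... — here one argues that $A$ has a winning strategy all of whose moves (and hence all of $B$'s relevant moves) lie in a bounded region $\setn^2$ for $n$ depending only on $N$ and the diameter of the patterns in $\F$. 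Indeed: $A$ can always respond to a ``distant'' $B$-move as if $B$ had passed, so we may assume $B$ also plays within the bounded region where the action is; then $A$'s winning strategy, restricted to this region, is a winning strategy for $\Gamma(\A,\F,\setn^2)$. Hence $A$ wins some finite game, which is the desired conclusion. The remaining bookkeeping — that ``ignoring distant moves'' produces legal plays and preserves the winning condition, and the precise choice of $n$ — is routine once the boundedness of $N$ is in hand.
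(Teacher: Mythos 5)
Your easy direction is fine (and matches the paper's, up to dualizing the statement to $A$). The problem is the hard direction. You rephrase it as ``$A$ wins on $\Z^d$ $\Rightarrow$ $A$ wins on some $\setn^2$'' and propose to prove it by first establishing that the game value $v(\emptyset,A)$ is finite. But in the paper that finiteness is exactly Corollary~\ref{cor:gamevalue}, which is \emph{deduced from} Lemma~\ref{lem:compacity}; your plan inverts the logical order, and the independent argument you sketch for finiteness does not close the gap. Concretely, at a position $(p,B)$ the value is $\sup v(p',A)+1$ over infinitely many moves of $B$, and you claim only finitely many of them matter because moves far from $\supp(p)$ are ``essentially equivalent up to translation.'' Two far-away moves are interchangeable only if the placed tile never interacts with the rest of the play, which requires an a priori bound on how long the game still lasts --- precisely the quantity you are trying to bound. (The same issue is visible in Section~\ref{sec:boundedtime}: the board-splitting game $\Gamma^\omega_T$ needs the horizon $T$ as input to set the separation radii $2^{T-t}$, so Theorem~\ref{thm:boundedtime} cannot be used as a black box here either.) Finiteness of the game value is a genuine theorem in this setting --- the paper leaves open in Question~\ref{conj3} whether it fails for the no-passing variant, and it does fail for infinite chess --- so it cannot be assumed or waved through; as written, the key step of your proof is missing.

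The paper avoids this circularity by working on $B$'s side of the equivalence: for each $n$ it takes a \emph{strongly winning} strategy $\Se_n$ for $B$ on $\setn^2$ (one that wins from every winning position, not just the initial one, which exists because the finite game is determined in the strong sense), compactifies the space of moves by treating $\pass$ as the point at infinity, extracts a limit point $\Se_\infty$ of the sequence $(\Se_n)$, and then checks that along any finite prefix of a game in which $B$ plays $\Se_\infty$, every position reached remains winning for $B$ in a sufficiently large finite game, hence is never final. The finite game value and the confinement of $A$ to a bounded window then come out as corollaries. If you want to keep your $A$-sided formulation you would still need some such limit-of-strategies or K\"onig-type compactness argument on the strategy space; the translation-equivalence shortcut you propose does not substitute for it.
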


\begin{proof} If $B$ has a winning strategy $\Se$ for $\Gamma(\A, \F, \Z^d)$, $\Se$ is also winning for $\Gamma(\A, \F, \setn^d)$ (passing when $\Se$ outputs a move outside of $\setn^d$). Conversely, if $B$ wins $\Gamma(\A, \F, \setn^d)$ which is a finite game, $B$ has a \emph{strongly winning} strategy $\Se_n$, that is, applying the strategy wins the game from any winning position (not only the starting position).

To define a strategy $\Se_\infty$ for $\Gamma(\A, \F, \Z^d)$ as a limit strategy of the sequence $(\Se_n)$, since the space of possible moves $\Z^d\times \A$ is not compact, we consider $\pass$ as a point at infinity (one-point compactification). Concretely, $\Se_\infty$ is a limit strategy of $(\Se_n)$ if, and only if, on every pattern $p$:

\begin{itemize}
\item $\Se_\infty(p) = (i,a)$ only if $\Se_n(p) = (i,a)$ for infinitely many $n$;
\item $\Se_\infty(p) = \pass$ only if $\Se_n(p) = \pass$ for infinitely many $n$, or if $\{\Se_n(p) : n\in\N\}$ contains moves arbitrarily far from $0$.
\end{itemize}

Notice that there may be multiple limit strategies. 
Let $(p_t)_{t\leq \ell}$, $\ell<\infty$ be the beginning of a game where $B$ applies $\Se_\infty$; we show that $p_\ell$ is not final, i.e. $A$ cannot win. This is a finite sequence, so $A$ played only in $\setn^d$ for some $n$. 

The starting position $p_0$ is winning for $B$ in $\Gamma(\A, \F, \setn^d)$, so $p_1$ is as well. By definition of $\Se_\infty$, 
$\Se_\infty(p_1)$ agrees with some strategy $\Se_k(p_1)$ with $k\geq n$ when moves outside of $\llbracket -n,n \rrbracket^d$ are replaced by passes. $\Se_k$ is strongly winning on $\llbracket -k,k \rrbracket^d$, so its restriction on $\llbracket -n,n \rrbracket^d$ is also strongly winning, and $p_2 = \Se_\infty(p_1)$ is winning for $B$. Iterating this argument, we find that $p_\ell$ is winning for $B$, so $p_\ell$ is not final.
\end{proof}


\begin{corollary}
\label{cor:gamevalue}
If the game $\Gamma(\A, \F, \Z^d)$ is winning for $A$, then it has a finite game value. In fact, $A$ does not need to play outside $\setn^d$ for some $n$.
\end{corollary}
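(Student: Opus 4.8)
The plan is to obtain the corollary from Lemma~\ref{lem:compacity} by contraposition. Since a game being \emph{winning for $B$} means exactly that it is not winning for $A$, Lemma~\ref{lem:compacity} says equivalently that $\Gamma(\A,\F,\Z^d)$ is winning for $A$ if and only if $\Gamma(\A,\F,\setn^d)$ is winning for $A$ for some $n\in\N^\ast$. So I assume $\Gamma(\A,\F,\Z^d)$ is winning for $A$ and fix such an $n$; from then on everything happens inside the finite board $\setn^d$.

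The first step is to record that in the \emph{finite} game $\Gamma(\A,\F,\setn^d)$ every position winning for $A$ has a value in $\N$: from any position there are only finitely many legal moves (the alphabet and the board are finite), so the supremum in the clause defining $v(p,B)$ is actually a maximum, and an easy induction along the inductive definition of ``winning for $A$'' then assigns a finite value to each such position. Let $N$ be the value of $(\emptyset,A)$ in this finite game, and let $\Se_n$ be a strongly winning strategy for $A$ in it (finite games admit strongly winning strategies for the winning player, exactly as in the proof of Lemma~\ref{lem:compacity}). I will also use the elementary observation (the ``pass'' instance of the $B$-clause) that $(p,B)$ winning for $A$ implies $(p,A)$ winning for $A$, with $v(p,A)<v(p,B)$.

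Next I would lift $\Se_n$ to a strategy for $A$ in $\Gamma(\A,\F,\Z^d)$ that never plays outside $\setn^d$: on a position $p$, play $\Se_n(p|_{\setn^d})$, which is legal since $\setn^d\setminus\supp(p|_{\setn^d})=\setn^d\setminus\supp(p)$. To see that this wins, take any play $(p_t)_t$ of $\Gamma(\A,\F,\Z^d)$ in which $A$ applies this strategy, and consider the restricted sequence $(p_t|_{\setn^d})_t$. It is a legal play of the finite game in which $A$ follows $\Se_n$ and in which each of $B$'s moves is either a genuine move inside $\setn^d$ or --- whenever $B$ plays outside $\setn^d$, or passes --- a pass of the finite game. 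Since $\Se_n$ is strongly winning, starting from the winning-for-$A$ position $(\emptyset,A)$, and since $B$'s passes preserve being winning for $A$ while strictly decreasing the value, this restricted play reaches a final position of the finite game within $N$ turns: some pattern of $\F$ occurs at a location contained in $\setn^d$, hence it occurs in the corresponding $p_t$ as well. So $A$ wins $\Gamma(\A,\F,\Z^d)$ within a bounded number of turns while playing only inside $\setn^d$; in particular the game value is finite (at most $N$).

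The step that needs the most care is this last verification: that the restricted sequence $(p_t|_{\setn^d})_t$ really is a legal play of the finite game in which $A$ follows $\Se_n$ --- i.e. that $B$'s moves outside the box behave exactly as passes there, and that a strongly winning strategy keeps winning under such passes. Everything else is bookkeeping with the inductive definition of the value.
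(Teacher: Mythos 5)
Your proof is correct and follows the paper's intended derivation: contrapose Lemma~\ref{lem:compacity} to obtain a finite box $\setn^d$ on which $A$ wins, note that the finite game has a finite value, and lift the finite-board strategy to $\Z^d$ by treating $B$'s moves outside the box (and passes) as passes of the finite game. The paper leaves this derivation implicit, and your verification that the restriction of a play to $\setn^d$ is a legal play of the finite game in which $A$ follows the strongly winning strategy is exactly the bookkeeping it omits.
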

The following result implies that there is no computable bound on $n$.

\subsection{Hardness}

\begin{proposition}
\label{prop:hard}
For any $d>0$, $co\domino{\Z^d} \leq \dgame{\Z^d}$. In particular, $\dgame{\Z^d}$ is $\Sigma^0_1$-hard when $d>1$.
\end{proposition}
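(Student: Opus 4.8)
The plan is to reduce the complement of the Domino problem to the Domino game problem: given an SFT $(\A,\F)$ on $\Z^d$, I want to construct a new SFT $(\A',\F')$ such that $A$ has a winning strategy for $\Gamma(\A',\F',\Z^d)$ if and only if $(\A,\F)$ admits \emph{no} admissible configuration (equivalently, no admissible pattern on arbitrarily large boxes, by compactness). The intuition: if the original SFT is empty, then whatever $B$ tries to build, some forbidden pattern is eventually forced to appear, and $A$ merely has to "wait" long enough; conversely, if the original SFT is nonempty, $B$ has a valid global colouring to follow and can avoid all forbidden patterns forever. So the game must be rigged so that it is essentially $B$ who is colouring $\Z^d$ and trying to stay admissible, while $A$'s role is only to eventually catch a violation.

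The first step is the encoding. The naive idea "let $B$ colour and $A$ just verifies" fails because $A$ also gets to place tiles and could sabotage $B$'s configuration, and because in our game $A$ is the one who wins by \emph{creating} a forbidden pattern, not $B$. I would fix both issues by enlarging the alphabet, e.g. $\A' = \A \cup \{A\text{-marked copies}\}$ or $\A' = \A \times \{0,1\}$ together with a fresh "spoiler" symbol, and designing $\F'$ so that (i) any cell carrying an $A$-type symbol, or any local inconsistency between adjacent $\A$-symbols relative to $\F$, immediately constitutes a pattern in $\F'$ — so $A$ wins instantly the moment $A$ plays anything other than a pass, \emph{and} the moment $B$ creates a genuine $\F$-violation; (ii) passing is effectively forced on $A$. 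Since $A$ wins the instant a final pattern appears and loses only by playing forever without one, making every non-pass $A$-move immediately final is self-defeating for $A$ only if... — here one must be careful: a pattern in $\F'$ means $A$ wins, so I actually want $A$'s "bad" moves to \emph{not} be in $\F'$; instead I should arrange that $A$'s tiles are simply useless (they can be overwritten in effect, or occupy cells $B$ doesn't need), and the forbidden patterns $\F'$ are exactly: "two adjacent $\A$-cells whose colours are locally forbidden by $\F$". Then $A$ can never actively create such a pattern faster than $B$ would be forced to; $A$'s only hope is that $B$ is eventually \emph{forced} into a local violation because the SFT is empty.

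With the encoding fixed, the two directions go as follows. \textbf{If $(\A,\F)$ is nonempty}, pick a configuration $x \in X_\F$. I claim $B$ wins $\Gamma(\A',\F',\Z^d)$: $B$'s strategy is, whenever it is $B$'s turn and some cell is still uncoloured, play that cell with colour $x_i$ (choosing cells in a fixed enumeration, skipping already-coloured ones); this never creates an $\F'$-pattern since $x$ is globally admissible and any cell $A$ has filled with a stray symbol is, by the construction, either harmless or can be worked around. Since $E = \Z^d$ is infinite and $B$ keeps colouring fresh cells, either the game lasts forever (so $B$ wins) or it would have to terminate with $\supp = \Z^d$, impossible in finite time; in all cases no final position occurs, so $B$ wins. \textbf{If $(\A,\F)$ is empty}, then by compactness there is $n$ such that no admissible pattern exists on $\setn^d$. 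Then $A$'s strategy is to pass forever. In any play, after enough turns $B$ has either left some cell of $\setn^d$ uncoloured forever (but $E$ is infinite and $B$ must eventually address it, or else... — here I need the rule that $B$ cannot stall indefinitely on a bounded region; if $B$ is allowed to pass forever too, the game is infinite and $B$ wins, which would break the reduction!). This is the crux, and the place where the construction needs its real idea: I must force $B$ to commit colours. The cleanest fix is to change the winning condition's burden by making the game on $\Z^d$ one where \emph{leaving a cell blank is itself eventually bad} — but blanks aren't patterns. So instead I would encode time/progress explicitly, or — more likely the intended route — reduce from $co\domino{\Z^d}$ by noting that in the \emph{infinite} game, if $A$ has no winning strategy then (by Corollary~\ref{cor:gamevalue} applied to $B$... no, that's for $A$) $B$ survives forever, and survival forever against \emph{all} of $A$'s passes means $B$ can extend indefinitely, yielding in the limit a configuration in $X_\F$. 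That limiting argument is exactly the compactness direction and is where most of the work lies.

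\textbf{Main obstacle.} The principal difficulty is the asymmetry of the game's winning conventions versus the Domino problem: in the Domino game, the "builder" who must stay admissible forever is $B$, the \emph{second} player and the one who wins by survival, whereas $A$ only wins in finite time. Making $A$'s passivity a genuine threat requires ensuring $B$ cannot simply stall (pass forever or dawdle on a bounded region) — so the heart of the construction is a gadget or rule-level argument forcing $B$ to keep committing colours on an exhausting sequence of cells, so that an infinite $B$-survival genuinely produces a global $\F$-admissible configuration, and conversely emptiness of $X_\F$ genuinely forces $B$ into a finite-time violation that $A$ detects while doing nothing. I expect the proof to spend most of its effort verifying precisely this: that $A$-passing-forever is a winning strategy exactly when $X_\F = \emptyset$, using the compactness Lemma~\ref{lem:compacity} to pass between the infinite game and finite boxes.
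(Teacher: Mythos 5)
You have correctly identified the shape of the reduction ($A$ wins $\Gamma(\A',\F',\Z^d)$ iff $X_\F=\emptyset$) and, to your credit, you have also correctly identified exactly where it breaks: if $A$'s plan is to pass and wait, then $B$ can pass too (or dawdle), the game lasts forever, and $B$ wins regardless of whether $X_\F$ is empty. But you stop at naming this obstacle; the proposal contains no mechanism that overcomes it, and the claim that ``$A$-passing-forever is a winning strategy exactly when $X_\F=\emptyset$'' is false as stated. This is a genuine gap, not a detail: the entire content of the proposition lives in the gadget you are missing.

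The paper's construction (on $\Z$; $\Z^d$ is handled line by line) resolves it as follows. Take $\A'=(\A^2\times\{\leftarrow,\rightarrow\})\cup\{\cn\}$, where a non-$\cn$ tile carries a colour for its own cell \emph{and} a colour it ``votes'' for one neighbour, and $\cn$ carries no information. A cell's \emph{interpretations} are the colours proposed for it by itself and its neighbours; $\F'$ consists of the blocks all of whose interpretations contain a pattern of $\F$ — in particular a surrounded cell with \emph{no} interpretation is forbidden. This gives $A$ an active forcing move rather than a pass: $A$ plays $\cn$ on consecutive cells, and each such move is a local threat ($B$ must immediately answer with an adjacent tile supplying an interpretation, or $A$ completes a surrounded uninterpretable cell and wins). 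So $B$ is compelled to commit colours at $A$'s pace, the growing block acquires a unique interpretation, and by compactness that interpretation contains a pattern of $\F$ once the block is long enough — exactly when $X_\F=\emptyset$. Conversely, when $x\in X_\F$ exists, the arrow component is what makes $A$'s tiles ``harmless'' in the precise sense you only gestured at: $B$ answers each move of $A$ at cell $i$ with a tile at $i\pm1$ whose vote assigns $x_i$ to $A$'s cell and $x_{i\pm1}$ to its own, so every coloured cell admits the admissible interpretation $x$ and no pattern of $\F'$ ever appears. Without some such two-way encoding — $A$'s moves forcing $B$ to commit, and $B$'s moves overriding the meaning of $A$'s tiles — neither direction of your reduction goes through.
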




\begin{proof}
We describe a computable transformation that to a SFT $(\A, \F)$ on $\Z^d$ associates a SFT $(\A', \F')$ on $\Z^d$ such that $X_{\F}=\emptyset$ if and only if $A$ has a winning strategy for the game $\Gamma(\A', \F', \Z^d)$.

Define $\A'=(\A^2 \times \{\leftarrow, \rightarrow \}) \cup \{\cn\}$ (where $\cn \notin \A$ is a fresh "black box" symbol). A colour $c \in \A'\backslash \{\cn\}$ is given by $c=(\pi_1(c), \pi_2(c), \pi_3(c))$. Let $e_1 = (1,0,\dots,0)$.

Let us define a notion of \emph{interpretation}. Given a pattern $p$ on the alphabet $\A'$, each cell $i$ is \emph{interpreted} by a set of colours $\iota_{i}(p)$ in $\A$ defined by:

\begin{itemize}
\item If $i\in \supp(p)$ and $p_i \neq \cn$, then $\pi_1p_{i} \in \iota_{i}(p)$.

\item If $i-e_1\in\supp(p)$ and $p_{i-e_1} \neq \cn$ and $\pi_3p_{i-e_1} = \ram$, then $\pi_2p_{i-e_1} \in \iota_{i}(p)$.

\item If $i+e_1\in\supp(p)$ and $p_{i+e_1} \neq \cn$ and $\pi_3p_{i+e_1} = \lam$, then $\pi_2p_{i+e_1}\in \iota_{i}(p)$.
\end{itemize}

Every cell has $0$ to $3$ interpretations. See Figure~\ref{fig:tuiles} for an example.

\begin{figure}
    \centering
\begin{tikzpicture}

\tr{0}{0}{white!60!blue}{brown}
\sqc{1}{0}{black}
\tl{2}{0}{white!40!red}{white!60!blue}
\tr{3}{0}{white!40!green}{brown}

\draw (1.5,0) node[below] {$p_{i}$};
\draw (2.5,0) node[below] {$p_{i+e_1}$};

\draw (0.9,0.9) node {\tiny $0$};
\draw (0.9,0.4) node {\tiny $1$};
\draw (2.9,0.9) node {\tiny $1$};
\draw (2.9,0.4) node {\tiny $2$};
\draw (3.9,0.9) node {\tiny $0$};
\draw (3.9,0.4) node {\tiny $3$};

\end{tikzpicture}
    \caption{A pattern $p$ on alphabet $\{0,1,2,3,\blacksquare\}$, where $0,1,2,3$ are represented by colors $\textcolor{brown}{\blacksquare}^0,\textcolor{white!60!blue}{\blacksquare}^1,\textcolor{white!40!red}{\blacksquare}^2,\textcolor{white!40!green}{\blacksquare}^3$, respectively.
    $\iota_{i}(p)=\{\textcolor{brown}{\blacksquare}^0, \ \textcolor{white!40!blue}{\cn}^1\}$ and $\iota_{i+e_1}(p)=\{ \textcolor{white!40!red}{\cn}^2\}$.}
    \label{fig:tuiles}
\end{figure}


By extension, for a pattern $p'$ and $S\subset \Z^d$, 
define its set of interpretations $\iota_S(p')$ as the set of patterns $p\in \A^S$ such that $p\in\iota_S(p') \Leftrightarrow \forall i\in S, p_i\in\iota_i(p')$.

Assume without loss of generality that $\F \subset \A^{\setn^d}$ for some $n \in \N^*$. Let $\F'$ be the set of patterns $p'\in\A'^{\llbracket -n-1, n+1\rrbracket^{d}}$\footnote{Choosing $\llbracket -n-1, n+1\rrbracket\times \setn^{d-1}$ as the support for forbidden patterns would be enough, but we made the support slightly larger for clarity.} such that $\iota_{\setn^d}(p')\subset \F$. We show that $X_{\F}=\emptyset$ if and only if $A$ has a winning strategy for $\Gamma(\A', \F', \Z^d)$. 

A cell $i$ is said to be \emph{surrounded} if $i-e_1$ and $i+e_1$ are coloured. Notice that, if a coloured surrounded cell has no interpretation (such as the center of $\cn \cn \cn$), $A$ eventually wins by playing around it until some pattern $p'$ of support $\llbracket -n-1, n+1\rrbracket^{d}$ is created; $\iota_{\setn^d}(p') = \emptyset$, so $p'\in\F'$.

\paragraph{Assume that $X_{\F} = \emptyset$.} We describe a winning strategy for $A$ that maintains the following invariant: no cell has more than one interpretation, and every uncoloured cell has no interpretation. 

First note that, assuming the invariant holds, if $A$ plays $(i, \cn)$ and $i$ is surrounded, $A$ wins since the cell $i$ has no interpretation. Otherwise, $B$ must play a tile at $i\pm e_1$ that gives an interpretation to $i$: if they do not and $i+e_1$ is not already coloured, $A$ plays $(i+e_1, \blacksquare)$. Both $i$ and $i+e_1$ have no interpretation, so $A$ is able to create a surrounded cell with no interpretation next move and eventually wins. The other case is symmetric. 


Notice that the move of $B$ either loses quickly or provides an interpretation only to coloured cells that had no interpretation before, so such a move by $A$ maintains the invariant. 

The strategy of $A$ is to play a $\cn$ tile to a free cell closest to $0$ unless $B$ deviates as above. Let us prove that this strategy is winning for $A$. 

By compactness, since $X_{\F}=\emptyset$, there exists an $m \in \N$ such that every pattern in $\A^{\llbracket -m, m \rrbracket^d}$ has a sub-pattern in $\F$. As a consequence, for any pattern $M \in \A'^{\llbracket -m-1, m+1 \rrbracket^d}$, all interpretations in $\iota_{\llbracket -m, m \rrbracket^d}(M)$ contain a sub-pattern in $\F$. When $A$ applies that strategy, some pattern $p'$ from $\A'^{\llbracket -m-1, m+1 \rrbracket^d}$ is eventually created. $p'$ has a unique interpretation $p$, and $p$ contains some subpattern $q\in \F$. Denoting $a + \setn^d = \supp(q)$, the pattern $q' = p'|_{a + \llbracket -n-1,n+1\rrbracket}$ has a unique interpretation which is inadmissible. Therefore $q'\in\F'$ and $A$ wins. 

\paragraph{Assume that $X_{\F}\neq \emptyset$} and let $x \in X_{\F}$. We define a winning strategy for $B$ based on the following invariant: before $A$ plays, on every line of direction vector $e_1$, every maximal connected set of uncoloured cells is either infinite or of even length. This invariant is true in the starting position $(\emptyset,A)$. 

When $A$ plays at $(i,a)$, they split a maximal connected uncoloured set in two parts: one is even (possibly empty or infinite), the other is odd (nonempty, possibly infinite). If the odd set is to the right, $B$ plays the colour $(x_{i+e_1}, x_i, \leftarrow)$ on the cell $i+e_1$, restoring the invariant. The other case is symmetric.

After $B$ plays, every tile $p_i$ admits the interpretation $x_i$. Since $x \in X_{\F}$, this gives an admissible interpretation to all patterns. This strategy is therefore winning for $B$.
\end{proof}

\section{Games with bounded time}\label{sec:boundedtime}

In this section, we consider a variant where the number of turns is bounded.

\begin{theorem}
\label{thm:boundedtime}
The following problem is decidable: given $(\A, \F, E)$ and $T\in \N$, does $A$ have a strategy to win the game $\Gamma(\A, \F, E)$ in $T$ moves or less?
\end{theorem}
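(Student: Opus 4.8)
The plan is to reduce the bounded-time problem to a game on a \emph{finite} board, which is then solved by a straightforward backward induction. Let $D = \max_{p\in\F}\operatorname{diam}(\supp p)$, which is finite since $\F$ is finite, and fix a radius $N = N(T,D)$ that is a suitable polynomial in $T$ and $D$ (say $N = (T+2)(D+1)$). I claim that $A$ wins $\Gamma(\A,\F,\Z^d)$ within $T$ moves (counting plies, i.e. moves by either player) if and only if $A$ wins $\Gamma(\A,\F,\llbracket -N,N\rrbracket^d)$ within $T$ moves; for finite $E$ there is nothing to prove, as the game is already finite, and for other effectively given infinite $E$ one runs the same argument with $E\cap\llbracket -N,N\rrbracket^d$. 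Granting the claim, the right-hand game is played on a finite board for at most $T$ turns, so its game tree is finite; one computes its minimax value with a depth counter and thereby decides whether $A$ wins within $T$ moves.

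For the easy direction ($\Leftarrow$), I would argue exactly as in the easy half of Lemma~\ref{lem:compacity}. If $A$ wins the finite game within $T$ moves, then by backward induction on the finite tree $A$ has a strategy that is \emph{strongly winning within $T$ moves}: from every position reachable in $k<T$ moves from which $A$ can still force a final position in $\le T-k$ further moves, the strategy does so. On $\Z^d$, $A$ follows this strategy, treating every move $B$ makes outside $\llbracket -N,N\rrbracket^d$ as a pass. The key point is monotonicity: colouring extra cells can only turn a position \emph{into} a final position, never out of one, so the tiles $B$ places outside the box neither prevent $A$'s eventual win nor interfere with the legality of $A$'s moves (which all lie inside the box, where $B$ has not played), and a final position is reached at the same turn or earlier. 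Hence $A$ wins $\Gamma(\A,\F,\Z^d)$ within $T$ moves.

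For the hard direction ($\Rightarrow$) I would argue the contrapositive by transporting a \emph{defensive} strategy from the box to $\Z^d$. Suppose $A$ does not win the finite game within $T$ moves; then $B$ has a strategy $\Se_B$ on $\llbracket -N,N\rrbracket^d$ that, from every reachable position, prevents any pattern of $\F$ from appearing for the remaining turns up to a total of $T$. One first observes that $\Se_B$ may be taken to play only within distance $D$ of an already coloured cell: a move farther away can only help $A$ and merely spends a turn, so it is never needed on a reachable position. Now define a strategy for $B$ on $\Z^d$ that maintains the partition of the coloured region into \emph{clusters} (maximal sets any two of whose cells are linked by a chain of coloured cells with consecutive distances $\le D$). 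Since $N$ is chosen far larger than $TD$, whenever $A$ opens a new cluster it sits at distance $>2D$ from every coloured cell, so $A$ can never merge two existing clusters, while $B$'s moves, being within distance $D$ of coloured cells, always stay inside a single cluster. Within each cluster, $B$ plays the appropriately translated copy of $\Se_B$ applied to that cluster's current (translated) pattern. Because an $\F$-pattern has diameter $\le D$, which is less than the inter-cluster distance, any $\F$-pattern that ever appears lies inside one cluster, where it is a translate of a position that $\Se_B$ forbids for $T$ turns; and $A$ plays at most $T$ moves overall, so no cluster ever sees more than $T$ moves. Hence no $\F$-pattern appears within $T$ turns on $\Z^d$, i.e. $A$ does not win within $T$ moves.

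The main obstacle is precisely the bookkeeping in this last direction: one must pin down $N$ (and an auxiliary separation threshold between clusters) large enough that the clusters provably remain pairwise far apart throughout every play, and that $B$'s local copies of $\Se_B$ never leak out of their cluster into a region near another one — this is exactly where the bounds $\operatorname{diam}(\supp p)\le D$ and the cap of $T$ turns are used, and it is the heart of the argument. Everything else (finiteness and solvability of the box game, and the monotonicity behind the easy direction) is routine; and the extension to general effectively presented $E$ is an essentially cosmetic modification of the same scheme.
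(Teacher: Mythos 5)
Your overall architecture --- reduce to a finite, bounded-depth game and solve it by exhaustive minimax, with the easy direction handled by monotonicity much as in Lemma~\ref{lem:compacity} --- is the same as the paper's, but the hard direction has a genuine gap, located exactly where you yourself say ``the heart of the argument'' lies. Your clusters are defined by a \emph{fixed} linking threshold $D$, and you assert both that a newly opened cluster sits at distance $>2D$ from every coloured cell and that $A$ can never merge two clusters. Neither follows from your definitions: $A$ may place a tile at distance exactly $D+1$ from an existing cluster (so it opens a new cluster under your definition, since it is not linked to anything), and on a later turn place a tile within distance $D$ of both, merging them. After a merge the combined region is not a translate of any position reached while $B$ follows $\Se_B$, so $\Se_B$'s guarantee no longer applies to it, and forbidden patterns involving the bridging tile are not covered by either local copy of the defence. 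Choosing $N$ large does not help, because $A$ controls where new clusters open relative to old ones; with \emph{any} constant separation threshold the same two-step attack (open just beyond the threshold, then bridge) defeats the bookkeeping you defer, so the contrapositive argument does not go through as stated.

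The paper's proof is built precisely to kill this attack: it uses a collection of boards together with a \emph{turn-dependent, geometrically shrinking} threshold. A move at turn $t$ opens a new board only if it is at distance $>2^{T-t}$ from all existing tiles, and otherwise is assigned to the unique board within that distance; the invariant maintained is that distinct boards remain at distance $>2^{T-t}$ after turn $t$. Since the gap between two boards can shrink by at most $2^{T-t-1}+2^{T-t-2}+\cdots<2^{T-t}$ over all remaining turns, boards never come closer than distance $1$, so a connected pattern of $\F$ can never straddle two boards and merging is impossible by construction. If you want to keep your single-box formulation, you must import this idea --- e.g.\ declare a move ``far'' at turn $t$ only when it lies at distance $>2^{T-t}D$ from all coloured cells, and prove the corresponding non-merging invariant by the same telescoping estimate; the rest of your argument (finiteness of the box game, monotonicity for the easy direction, $B$ treating $A$'s moves in other clusters as passes) is sound.
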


In other words, we decide whether the game has value at most $T$. This cannot be proved by brute force since there are infinitely many moves in each position; still, the same phenomenon occurs e.g. for chess on infinite grids \cite{brum} for similar reasons. We will see that moves that are sufficiently far from other tiles are, in some sense, equivalent. We use the distance $d(i,j) = \sum_{k=1}^d |i_k - j_k|$ for $i,j\in\Z^d$.

We define a variant $\Gamma_T^\omega(\A, \F, E)$ whose positions are given by $((p^k)_{k\leq b}, \rho)$ where $(p^k)$ is a finite sequence of patterns called \emph{boards} and $\rho$ is the current player. The initial position is $(\emptyset, A)$, where $\emptyset$ is the empty sequence. Possible moves are the following, where $t$ denotes the number of the current turn: 

\begin{itemize}
\item passing; 
\item adding a tile $(i,a)$ to one of the boards $p^k$, if $d(i,\supp(p^k))\leq 2^{T-t}$;
\item adding a new board $p^{b+1} = \{(0,a)\}$.

\end{itemize}
A position is final (and $A$ wins) if a pattern from $\F$ appears on any board. After turn $T$, $B$ wins if the position is not final. 

\begin{lemma}
\label{lem:bounded-decidable}
The Domino game problem for the game $\Gamma_T^\omega$ is decidable.
\end{lemma}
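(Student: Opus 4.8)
The plan is to show that the game $\Gamma_T^\omega(\A, \F, E)$, although it still has infinitely many positions and moves, is \emph{effectively finite}: there is a computable bound on the number of positions reachable within the $T$ turns that matter, up to a natural equivalence, so that the winner can be computed by backward induction on the (finite) game tree. First I would observe that only turns $1,\dots,T$ are ever played, so every play is a sequence of at most $T$ moves. At turn $t$, a move that adds a tile to board $p^k$ must lie within distance $2^{T-t}$ of $\supp(p^k)$; since the board was built from at most $t-1$ earlier tiles, each at distance at most $2^{T-1}+2^{T-2}+\dots$ from the previous, a crude induction shows every tile on every board lies within distance at most $2^{T}$ (say) of the origin of its board, and there are at most $T$ boards. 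Hence every reachable position consists of at most $T$ tiles spread over at most $T$ connected regions, each of diameter at most a computable function of $T$. Up to translating each board so that, say, its lexicographically least cell sits at the origin, there are only finitely many such positions, and this finite set is computably enumerable.

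Next I would make precise the claim that board positions are determined up to translation. The key point is that the legality of a move adding $(i,a)$ to board $p^k$ depends only on the \emph{shape} of $\supp(p^k)$ (via the distance condition $d(i,\supp(p^k))\le 2^{T-t}$) and not on where $p^k$ sits in $\Z^d$, and likewise whether a pattern from $\F$ appears on a board is translation-invariant. Consequently, if two positions differ only by independently translating each of their boards, then one is winning for $A$ iff the other is: the map on moves that applies the same translations is a bijection between the two game trees respecting finality and turn parity. So it suffices to do backward induction on \emph{translation-equivalence classes} of positions, of which, by the previous paragraph, there are finitely many and they are computable; from each class one picks a canonical representative, enumerates the finitely many legal move-classes (adding a tile in one of finitely many relative positions to an existing board, adding a new board, or passing), and evaluates the minimax value. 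Since the recursion has depth at most $T$ and branching bounded by a computable function of $T$ and $|\A|$, the whole computation terminates, and it outputs whether $A$ wins.

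The main obstacle I anticipate is being careful about what "finitely many positions up to translation" really means when there are several boards: one must translate each board \emph{separately}, and check that the distance bounds and the winning condition are each invariant under such per-board translations — which they are, precisely because boards never interact (a move touches exactly one board, and $\F$-patterns are read on a single board). A secondary subtlety is bounding the diameter of a board: the naive bound $2^{T-1}+\dots+2^{0}=2^T-1$ for the distance between consecutive tiles is fine, but one should note that a new board starts as $\{(0,a)\}$ and only grows by moves constrained at turn $t$ by $2^{T-t}$, so summing over $t$ gives an explicit computable bound; getting this bookkeeping exactly right (and observing it is the whole reason the radii shrink geometrically in the definition of $\Gamma_T^\omega$) is the one place the argument must be done with some care rather than waved through.
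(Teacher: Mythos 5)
Your argument is correct and is essentially the paper's (one-line) proof: the number of turns is bounded by $T$ and, because each board is anchored at the origin and grows only within the shrinking radii $2^{T-t}$, there are finitely many legal moves at each turn, so the game tree is finite and can be searched by backward induction. Your detour through translation-equivalence classes is harmless but unnecessary, since the definition of $\Gamma_T^\omega$ already places every new board at $\{(0,a)\}$, so positions are literally (not just up to translation) drawn from a finite, computable set.
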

\begin{proof}
The number of turns is bounded and there are finitely many possible moves at each turn.
\end{proof}


\begin{lemma}
\label{lem:transfo}
Assume for simplicity that all patterns of $\F$ are connected. There exists a transformation $\Theta$ from partial games for $\Gamma(\A, \F, \Z^d)$ to partial games for $\Gamma^\omega_T(\A, \F, \Z^d)$ of the same length. Furthermore, $A$ wins $\Theta(g)$ if and only if $A$ wins $g$.
\end{lemma}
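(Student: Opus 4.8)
The plan is to build $\Theta$ by induction on the length $\ell$ of the partial game, so that prefixes are mapped to prefixes. Alongside the $\Gamma^\omega_T$-game I also carry, as bookkeeping, a partition of the support of the current pattern into \emph{groups} $G^1,\dots,G^b$ together with translation vectors $v_1,\dots,v_b\in\Z^d$, maintaining the invariant that the $k$-th board of the current $\Gamma^\omega_T$-position is exactly $G^k - v_k$ (each board is its group recentred, and a board is created at the origin, $v_{b+1}$ being the position of the tile that created it). Writing $r_t := 2^{T-t}$ for the threshold at the $t$-th turn, the move $m_t$ of $g$ (leading from $p_{t-1}$ to $p_t$) is translated as follows: a $\pass$ becomes a $\pass$; a move $(i,a)$ is sent to ``add $(i-v_k,a)$ to board $k$'' if $d(i,G^k)\le r_t$ for some group $G^k$ of $p_{t-1}$, and to ``create the new board $\{(0,a)\}$'' (setting $v_{b+1}:=i$) otherwise. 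This $\Theta$ is computable and length-preserving; since a partial game of $\Gamma^\omega_T$ has length at most $T$, we restrict attention to $\ell\le T$.

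Two things must be verified for $\Theta$ to be well defined and to land in $\Gamma^\omega_T$: that the group $G^k$ in the first case is \emph{unique}, and that the resulting move is legal. Legality is immediate, since $d(i-v_k,\,G^k-v_k)=d(i,G^k)\le r_t$ is exactly the condition required, and $i\notin\supp(p_{t-1})$ together with $G^k\subseteq\supp(p_{t-1})$ forces $i-v_k\notin G^k-v_k$. Uniqueness follows from the invariant I expect to be the heart of the proof: \emph{before the $t$-th move, any two distinct groups are at distance $>2r_t$}. Granting this, if a cell $i$ satisfied $d(i,G^k)\le r_t$ and $d(i,G^{k'})\le r_t$ with $k\ne k'$, the triangle inequality would give $d(G^k,G^{k'})\le 2r_t$, a contradiction. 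The invariant holds vacuously before the first move, and it is preserved at each step: a $\pass$ changes nothing; a freshly created group lies at distance $>r_t=2r_{t+1}$ from every existing group, by the case distinction that triggered its creation; and when a cell at distance $\le r_t$ from $G^k$ is added to $G^k$, the distance from the enlarged group to any other group decreases by at most $r_t$, hence remains $>2r_t-r_t=r_t=2r_{t+1}$. This is exactly where the exponential spacing is needed: the relation $r_{t+1}=r_t/2$ leaves just enough room at each turn, which is why the thresholds must be $2^{T-t}$ and not, say, linear in $T-t$.

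It remains to check that $A$ wins $\Theta(g)$ if and only if $A$ wins $g$. Since $r_t\ge 1$, any two adjacent cells are always within the threshold, so a newly placed cell adjacent to an existing cluster is put in that cluster's group; consequently each group $G^k$ is a union of connected components of $\supp(p_t)$. As the patterns of $\F$ are connected, a copy of some $q\in\F$ occurs in $p_t$ if and only if it occurs inside a single connected component, hence inside a single group $G^k$, hence — since the $k$-th board equals $G^k-v_k$ — on that board of the $\Gamma^\omega_T$-position. Therefore $p_t$ is final in $\Gamma$ exactly when the corresponding position is final in $\Gamma^\omega_T$; taking $t=\ell$ gives the claimed equivalence of outcomes, while taking $t<\ell$ confirms that $\Theta(g)$ does not end prematurely, so it is indeed a partial game of $\Gamma^\omega_T$.
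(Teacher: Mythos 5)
Your proof is correct and follows essentially the same route as the paper's: the same board/group decomposition with translation vectors, the same separation invariant (distinct groups at distance $>2^{T-(t-1)}$ before turn $t$), the same triangle-inequality computation showing the exponential thresholds preserve it, and the same conclusion via connectedness of the patterns in $\F$. You merely spell out the uniqueness of the target board and the correspondence of final positions in slightly more detail than the paper does.
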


\begin{proof} 
Denote $g = (p_t)_{t\leq T}$; we construct $\Theta(g) = ((p^k_t)_{k\leq b_t})_{t\leq T}$ by induction on $t$. We assign a vector $z_{k} \in \Z^d$ to each board $p^k$ that is opened during the game $\Theta(g)$, and the following invariants will be preserved at each $t$:
\begin{enumerate}
    \item $\supp(p_t) = \cup_{k\leq b_t}(z_{k}+\supp(p^k_t))$
    \item ${p_t}_{|z_{k}+\supp(p^k_t)} = p^k_t$.	    
	\item $k \neq k' \Rightarrow d(z_{k}+\supp(p^k_t), z_{k'}+\supp(p^{k'}_t)) > 2^{T-t}$.
\end{enumerate}
If $t = 0$, then $g$ and $\Theta(g)$ are the starting positions and all invariants hold.

If $0<t \le T$, let $g = g' \xrightarrow{m} p_t$, and define inductively $\Theta(g) = \Theta(g') \xrightarrow{m'} (p^k_{t})_{k\leq b_t}$ as follows. 

\begin{itemize}
\item If $m = \pass$, then $m' = \pass$. 

\item If $m = (i,a)$ and $d(i, \supp(p_{t-1})) > 2^{T-t}$, then $m'$ opens a new board $p^{b_t+1}$ and plays $(0,a)$. The new board is assigned the vector $z_{b_t+1} \ed i$.

\item If $m = (i,a)$ and $d(i, \supp(p_{t-1})) \leq 2^{T-t}$, then there is a unique $k$ such that $d(i, z_k+\supp(p^k_{t-1}))\leq 2^{T-t}$ by the first and third invariants. Then $m'$ consists in playing $(i-z_{k},a)$ on board $p^k_t$.
\end{itemize}

It is clear that Invariants 1 and 2 are preserved in each case. Invariant 3 is preserved for all boards $k$ and $k'$:

If $m$ is on board $k$, then by construction \begin{align*}d(i, z_{k'}+\supp(p^{k'}_{t-1})) &\ge d(z_k+\supp(p^k_{t-1}), z_{k'}+\supp(p^{k'}_t)) - d(i, z_{k}+\supp(p^{k}_t)) \\&> 2^{T-(t-1)} - 2^{T-t}=2^{T-t}.\end{align*}

If $m$ is on a different board, \[\supp(p^k_t) = \supp(p^k_{t-1})\text{ and }\supp(p^{k'}_t) = \supp(p^{k'}_{t-1}).\]

Patterns in $\F$ are assumed to be connected, and the invariants ensure that a connected pattern appears in $p_T$ if and only if it appears in some board in $(p^k_T)_{k\leq b_T}$. Lemma~\ref{lem:transfo} is proved.
\end{proof}

Lemmas~\ref{lem:bounded-decidable} and \ref{lem:transfo} imply Theorem~\ref{thm:boundedtime} in the connected case, because $A$ has a strategy to win $\Gamma(\A, \F, E)$ in $T$ turns or less if and only if he wins on $\Gamma^\omega_T(\A, \F, E)$, which is decidable.

Indeed, $\Theta$ induces a transformation of strategies so that, if a strategy $S$ is winning for $A$ on $\Gamma(\A, \F, E)$ in $T$ turns or less, then $\Theta(S)$ is winning for $A$ on $\Gamma^\omega_T(\A, \F, E)$. Conversely, from a winning strategy $S'$ on $\Gamma^\omega_T(\A, \F, E)$, it is easy to build a strategy $S$ that is winning for $A$ on $\Gamma(\A, \F, E)$ in $T$ turns or less such that $\Theta(S) = S'$: $S$ is entirely determined except for the choice of the $z_k$ when a new board is opened, which can be given arbitrary values as long as they are far away from existing tiles.


The proof is easily adapted when patterns from $\F$ are not connected: let $\delta$ be the maximum diameter of a pattern from $\F$ and replace $2^{T-t}$ in Condition~3 by $\delta\cdot 2^{T-t}$. This ensure that all boards are always at least at distance $\delta$ from each other so that a forbidden pattern in $g$ must be contained in a single board of $\Theta(g)$.

\section{Non-alternating play}\label{sec:nonalternating}
\label{turnordered}
We consider a variant where players do not play in alternation but according to a \emph{turn order word} $s \in \{A, B\}^{\omega}$. The \emph{non-alternating Domino game} $\Gamma_s (\A, \F, E)$ has the same rules as the standard Domino game $\Gamma(\A, \F, E)$, except that $s_i$ is the current player at turn $i$. To keep track of the current player, positions are now given as $(p,s)$, where $p$ is a pattern, $s_0$ is the current player and every move shifts $s$ by one letter.

For $d>0$ and $s \in \{A, B\}^{\omega}$, the corresponding \emph{non-alternating Domino game problem} $\nagame_s(\Z^d)$ is defined as:

\decproblem{A SFT $(\A, \F)$ on $\Z^d$.}{Does $A$ have a winning strategy for the game $\Gamma_s(\A, \F, \Z^d)$?}

We begin with a few quick remarks.

\begin{proposition}
\label{prop:subshift}
Given $(\A, \F, E)$, the set of words $s$ such that $B$ wins the game $\Gamma_s(\A, \F, E)$ is a subshift.
\end{proposition}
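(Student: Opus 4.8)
The plan is to show that the set $W_B = \{s \in \{A,B\}^\omega : B \text{ wins } \Gamma_s(\A, \F, E)\}$ is closed under the shift map and topologically closed; these two properties characterise subshifts. For shift-invariance, I would argue that passing is always a legal move, so a turn where the designated player passes is "invisible" to the board. More precisely, given $s$ with first letter $s_0$, compare $\Gamma_s$ with $\Gamma_{\sigma(s)}$ (where $\sigma$ is the shift): a winning strategy for $B$ in $\Gamma_{\sigma(s)}$ can be lifted to $\Gamma_s$ by having the turn-$0$ player do nothing useful — if $s_0 = B$, instruct $B$ to pass on the first turn and then follow the $\Gamma_{\sigma(s)}$ strategy; if $s_0 = A$, then whatever $A$ plays on turn $0$ is just an extra tile, and $B$'s strategy for $\Gamma_{\sigma(s)}$ starting from that (non-final) position still wins since finite games from non-final positions behave well, or more robustly, one observes that adding a move for $A$ only helps $A$, so if $B$ cannot win $\Gamma_s$ he cannot win $\Gamma_{\sigma(s)}$ either after prepending an $A$-pass — this requires care and is where I expect the main subtlety to lie. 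The cleanest formulation is: $B$ wins $\Gamma_s$ iff $B$ wins $\Gamma_{\sigma(s)}$, proven by exhibiting strategy transformations in both directions that treat the extra turn via $\pass$ and by noting that an extra $A$-move can be simulated by $A$ within a single strategy (since $A$ may also pass, the games $\Gamma_{As}$ and $\Gamma_s$ give $A$ at least as much power in the former).

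For closedness, I would use the compactness argument already developed in Lemma~\ref{lem:compacity}. Suppose $s^{(n)} \to s$ in $\{A,B\}^\omega$ with each $s^{(n)} \in W_B$; I must show $s \in W_B$. If $E$ is finite the game is a finite game and the winner depends only on finitely many letters of $s$, so $W_B$ is clopen and we are done. If $E = \Z^d$ (or infinite), I would adapt the limit-of-strategies construction: for each $n$, $B$ has a strongly winning strategy $\Se_n$ for $\Gamma_{s^{(n)}}(\A, \F, \setn^d)$ (restricting to finite boards as in Lemma~\ref{lem:compacity}), and since $s^{(n)}$ agrees with $s$ on longer and longer prefixes, one extracts a limit strategy $\Se_\infty$ — using the one-point compactification with $\pass$ as in the proof of Lemma~\ref{lem:compacity} — that is winning for $B$ in $\Gamma_s(\A, \F, \Z^d)$. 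The verification that no finite partial game where $B$ applies $\Se_\infty$ reaches a final position is exactly as in Lemma~\ref{lem:compacity}: a finite play only touches $\setn^d$ and only reads finitely many turn-order letters, so for $n$ large enough $s^{(n)}$ and $s$ agree on that prefix and $\Se_\infty$ locally coincides with some strongly-winning $\Se_k$.

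The main obstacle I anticipate is the shift-invariance direction, specifically handling the case $s_0 = A$: one must be sure that "giving $A$ a free extra move at the start" does not turn a $B$-win into an $A$-win, and conversely that it does not turn an $A$-win into a $B$-win. The resolution is that $A$ can always decline the extra move by passing, so $\Gamma_{As}$ is at least as favourable to $A$ as $\Gamma_s$; combined with the observation that a single extra tile placed by $A$ on an empty board can be "ignored" by $B$'s subsequent strategy (formally, $B$ plays the $\Gamma_s$-strategy as if that tile's cell were still to be filled, which is legal and still wins because the strategy is strongly winning and the position after $A$'s move, being non-final, is still winning for $B$ whenever the original was), one gets equivalence. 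I would phrase this via strongly winning strategies to avoid delicate reasoning about who "wins" from intermediate positions. Once shift-invariance and closedness are established, $W_B$ is a subshift by definition, completing the proof.
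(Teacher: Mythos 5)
There is a genuine gap in your shift-invariance step, and it is not a technicality: you argue the wrong inclusion. For $W_B=\{s : B \text{ wins } \Gamma_s\}$ to be a subshift you need $\sigma(W_B)\subseteq W_B$, i.e.\ \emph{if $B$ wins $\Gamma_s$ then $B$ wins $\Gamma_{\sigma(s)}$} --- equivalently, prepending a letter to the turn order preserves a win for $A$. What you actually sketch (lifting a winning strategy of $B$ for $\Gamma_{\sigma(s)}$ to $\Gamma_s$) is the other inclusion $\sigma^{-1}(W_B)\subseteq W_B$, and the biconditional you call the ``cleanest formulation'' is false. Counterexample: $\A=\{a\}$, $\F=\{aa\}$, $s=AAB^{\omega}$. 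Then $A$ wins $\Gamma_s$ in two moves, while $B$ wins $\Gamma_{\sigma(s)}=\Gamma_{AB^{\omega}}$ by always passing, since $A$ only ever places one tile. So ``$B$ wins $\Gamma_s$ iff $B$ wins $\Gamma_{\sigma(s)}$'' fails, and your supporting claim that after an extra initial $A$-move the position, ``being non-final, is still winning for $B$ whenever the original was'' is exactly the error: non-final does not imply winning for $B$. The direction you actually need is true but requires different work: if $A$ wins $\Gamma_{\sigma(s)}$ and $s_0=A$, then $A$ passes on turn $0$; if $s_0=B$, then $A$ must cope with one junk tile placed by $B$, which is handled by translating $A$'s winning strategy far away from that tile --- legitimate only because the strategy can be confined to a bounded region, the analogue of Corollary~\ref{cor:gamevalue} for $\Gamma_s$. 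This is precisely how the paper phrases it: the set $W$ of finite prefixes on which $A$ wins satisfies $w\in W\Rightarrow vw\in W$ for every $v$.

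Your closedness argument is essentially sound but much heavier than necessary, and the paper avoids it entirely: since a win for $A$ in $\Gamma_s$ has finite game value, it is witnessed by a finite prefix of $s$, so the complement of $W_B$ is a union of cylinder sets and hence open; no limit-of-strategies extraction over both the board size and the sequence $s^{(n)}$ is required. The paper folds both halves into one step by defining $W$ as above, proving closure under prepending, and concluding that $W_B$ is exactly the subshift whose forbidden words are $W$. I would restructure your proof the same way: prove the prefix characterisation and the prepending closure, and the subshift structure follows without any separate topological argument.
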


\begin{proof}
If $A$ wins on $\Gamma_s(\A, \F, E)$, then the game value is finite, for the same reason as Corollary~\ref{cor:gamevalue}. Consequently, the winning strategy of $A$ only depends on some prefix $s_{\llbracket0, t\rrbracket}$. Let $W \subseteq \{A, B\}^{\ast}$ be the set of such prefixes on which $A$ wins. 
Notice that if $w\in W$, then $vw\in W$ for any $v\in\{A,B\}^\ast$: starting at turn $|v|+1$, $A$ applies their winning strategy on $w$ far away from existing tiles. Therefore $B$ wins if, and only if, no pattern from $W$ appears in $s$.
\end{proof}

\begin{proposition}
\label{cor:finitestring}
Given $d>0$ and $w \in \{A, B\}^\ast$, $\nagame_{wB^\omega}(\Z^d)$ is decidable.
\end{proposition}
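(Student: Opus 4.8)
The plan is to reduce to a bounded-time game and reuse the machinery behind Theorem~\ref{thm:boundedtime}. The starting observation is that in $\Gamma_{wB^\omega}(\A, \F, \Z^d)$, every turn after the $|w|$-th belongs to $B$; since $E = \Z^d$ is infinite, $B$ may pass on all of these turns, producing an infinite game, which $B$ wins. Hence, if no pattern from $\F$ has appeared during the first $|w|$ turns, the current (finite) pattern is frozen forever and $B$ wins; conversely, if such a pattern appears within the first $|w|$ turns, $A$ has already won. Therefore $A$ has a winning strategy for $\Gamma_{wB^\omega}(\A, \F, \Z^d)$ if and only if $A$ has a strategy that, playing only on the turns $i < |w|$ with $w_i = A$, forces a pattern from $\F$ to appear within the first $T \ed |w|$ turns, regardless of $B$'s moves on the turns $i < |w|$ with $w_i = B$. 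Note that $T$ is computable from the input.

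This last condition is a non-alternating, bounded-time variant of the Domino game, and the proof of Theorem~\ref{thm:boundedtime} carries over essentially verbatim. Concretely, I would define a game $\Gamma_w^{\omega}(\A, \F, \Z^d)$ exactly as $\Gamma_T^\omega$ with $T = |w|$, except that the current player at turn $t < |w|$ is $w_t$ instead of being determined by alternation (and $B$ simply passes on any turn $t \ge |w|$). As in Lemma~\ref{lem:bounded-decidable}, this game has a bounded number of turns and finitely many legal moves in each position, so deciding whether $A$ wins $\Gamma_w^\omega$ is decidable by exhaustive exploration of the game tree.

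It remains to transfer between $\Gamma_{wB^\omega}$ (equivalently, its first $|w|$ turns) and $\Gamma_w^\omega$, which is done by the transformation $\Theta$ of Lemma~\ref{lem:transfo}: its construction and the three invariants it maintains depend only on the distances $2^{T-t}$ between the regions where tiles have been played, never on whose turn it is, so the same $\Theta$ and the same induction apply once ``turn $t$'' is read off from $w$ rather than from parity. As before, since patterns in $\F$ are connected, a pattern from $\F$ appears in $p_t$ if and only if it appears in one of the boards at time $t$, whence $A$ wins one game within $T$ turns if and only if $A$ wins the other. Combining this equivalence with the decidability of $\Gamma_w^\omega$ and the reduction of the first paragraph yields that $\nagame_{wB^\omega}(\Z^d)$ is decidable.

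The only genuinely delicate point is the first reduction: one must check that $B$'s freedom to keep moving (and passing) after turn $|w|$ cannot help $A$, i.e. that nothing $B$ is \emph{forced} to do after turn $|w|$ can create a pattern in $\F$ — which is immediate since $B$ may always pass and $E = \Z^d$ is infinite. Everything else, including the adaptation of Lemmas~\ref{lem:bounded-decidable} and~\ref{lem:transfo} to the turn-order $w$, is mechanical.
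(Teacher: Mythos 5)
Your proof is correct and follows essentially the same route as the paper: observe that after turn $|w|$ only $B$ plays and may pass forever, so the game reduces to whether $A$ can win within $|w|$ turns, and then invoke the bounded-time machinery of Theorem~\ref{thm:boundedtime} (Lemmas~\ref{lem:bounded-decidable} and~\ref{lem:transfo}), which is insensitive to the turn order. The paper states this in one line; you merely spell out the details.
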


\begin{proof}
    Use the same method as Theorem~\ref{thm:boundedtime}, noticing that $A$ wins only if they win in $|w|$ moves or less.
\end{proof}


\begin{corollary}
If $s$ is computable, $\nagame_s(\Z^d)$ is recursively enumerable.
\end{corollary}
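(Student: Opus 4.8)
The plan is to write the winning condition for $A$ as a countable union of decidable conditions and then invoke the decidability of the finite‑string case. The key claim is that, for every SFT $(\A, \F)$,
\[
A \text{ wins } \Gamma_s(\A, \F, \Z^d)
\iff
\exists T \in \N : \ A \text{ wins } \Gamma_{s_{\llbracket 0, T\rrbracket}B^\omega}(\A, \F, \Z^d).
\]
Granting the claim, the corollary follows at once: given $T$, the finite prefix $w = s_{\llbracket 0, T\rrbracket}$ is computable because $s$ is, and whether $A$ wins $\Gamma_{wB^\omega}(\A, \F, \Z^d)$ is decidable by Proposition~\ref{cor:finitestring} — uniformly in $w$, since that decision procedure runs through the (uniformly decidable) bounded‑time game of Theorem~\ref{thm:boundedtime}. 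Hence $\nagame_s(\Z^d)$ is the existential quantification over $T$ of a predicate decidable in $(\A, \F, T)$, which is exactly the shape of a $\Sigma^0_1$ problem.

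For the forward implication, suppose $A$ wins $\Gamma_s(\A, \F, \Z^d)$. As already observed in the proof of Proposition~\ref{prop:subshift} (the compactness argument of Corollary~\ref{cor:gamevalue}, adapted to non‑alternating play), the game then has finite value, so $A$ has a strategy that wins within some number $T+1$ of turns; such a strategy only ever consults $s_0, \dots, s_T$. These letters coincide with the first $T+1$ letters of $s_{\llbracket 0, T\rrbracket}B^\omega$, so applying the same strategy in $\Gamma_{s_{\llbracket 0, T\rrbracket}B^\omega}(\A, \F, \Z^d)$ still wins there, within $T+1$ turns.

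For the backward implication, suppose $A$ wins $\Gamma_{wB^\omega}(\A, \F, \Z^d)$ for $w = s_{\llbracket 0, T\rrbracket}$, with $|w| = T+1$. The point — and the only genuinely game‑theoretic step of the proof — is that this win must already be secured within the first $T+1$ turns. Indeed, every turn after turn $T$ belongs to $B$, and since the $\pass$ move is always legal, $B$ is never compelled to complete a pattern of $\F$; so if the position were not final by the end of turn $T$, $B$ would pass forever and win. Therefore $A$'s winning strategy forces a final position using only the information $s_0, \dots, s_T$, and the same strategy wins $\Gamma_s(\A, \F, \Z^d)$ within $T+1$ turns.

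The main obstacle is thus the confinement observation in the backward direction; everything else is a matter of matching prefixes of turn‑order words, of noting that a strategy winning in bounded time depends only on the corresponding finite prefix, and of checking that Proposition~\ref{cor:finitestring} is uniformly decidable in $w$. If one prefers, the backward direction can also be read off directly from the set $W$ of winning prefixes introduced in the proof of Proposition~\ref{prop:subshift}, together with the same remark that a prefix $w$ lies in $W$ exactly when $A$ wins $\Gamma_{wB^\omega}$.
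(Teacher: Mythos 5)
Your proof is correct and follows the same route the paper intends: the corollary is obtained by writing ``$A$ wins $\Gamma_s$'' as $\exists T$ such that $A$ wins $\Gamma_{s_{\llbracket 0,T\rrbracket}B^\omega}$, which is decidable uniformly in $T$ by Proposition~\ref{cor:finitestring}, so the problem is $\Sigma^0_1$. You spell out both directions of the equivalence (finite game value via compactness for one direction, the observation that $B$ can pass forever after turn $T$ for the other) more explicitly than the paper does, but the underlying argument is the same.
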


\begin{proof}
If $A$ wins, the game $\Gamma_s(\A, \F, \Z^d)$ has a finite value, so the problem is equivalent to finding $T\in\mathbb N$ such that $A$ wins $\Gamma_{s_{\llbracket 0, T\rrbracket}B^\omega}(\A, \F, \Z^d)$.
\end{proof}

Our main result covers the case where the turn order word is \emph{balanced}.

\begin{definition}
A word $s \in \{A, B\}^{\omega}$ is \emph{balanced} if for all $i,j \in \Z$ and $n\in\N$, we have  $|s_{\llbracket i,i+n\rrbracket}|_A-|s_{\llbracket j,j+n\rrbracket}|_A \in \{-1, 0, 1\}$.
\end{definition}

Balanced words are either ultimately periodic or Sturmian. They were first studied in \cite{sd1}; see \cite{lothaire} (Chapter 2) for a modern exposition.

\begin{proposition}
Let $s$ be a balanced word. The number $\underset{n \rightarrow \infty}{\lim} \frac{|s_{0,n}|_A}{n+1} \in [0,1]$ exists and is called $f_A(s)$, the frequency of $A$ in $s$.
\end{proposition}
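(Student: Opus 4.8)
The plan is to show that the sequence $a_n = |s_{\llbracket 0,n\rrbracket}|_A$ is \emph{subadditive up to a bounded error}, and then invoke (a mild variant of) Fekete's lemma. Concretely, for any $m,n \in \N$ we have $a_{m+n+1} = |s_{\llbracket 0, m+n+1\rrbracket}|_A = a_m + |s_{\llbracket m+1, m+n+1\rrbracket}|_A$, and by the balancedness hypothesis applied to the two windows $\llbracket m+1, m+n+1\rrbracket$ and $\llbracket 0, n\rrbracket$ (both of length $n+1$) we get $\bigl| |s_{\llbracket m+1, m+n+1\rrbracket}|_A - a_n \bigr| \le 1$. Hence $a_{m+n+1} \le a_m + a_n + 1$, i.e. the sequence $b_n \ed a_n + 1$ is genuinely subadditive: $b_{m+n} = b_{(m-1)+(n-1)+1} = a_{m+n} + 1 \le a_{m-1} + a_{n-1} + 2 = b_{m-1} + b_{n-1}$, which after reindexing gives $b_{m+n} \le b_m + b_n$ for all $m,n\ge 1$ (one has to be a little careful with the off-by-one in the index range $\llbracket 0,n\rrbracket$, which has $n+1$ letters, but this is exactly the kind of routine bookkeeping I would not spell out in full).

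Given subadditivity, Fekete's lemma yields that $\lim_{n\to\infty} b_n/n = \inf_n b_n/n$ exists in $[0,+\infty)$, and since $b_n/n = (a_n+1)/n$ and $a_n/(n+1) = (b_n - 1)/(n+1)$ differ by a quantity tending to $0$, the limit $f_A(s) = \lim_n a_n/(n+1) = \lim_n |s_{0,n}|_A/(n+1)$ exists as well. That the limit lies in $[0,1]$ is immediate from $0 \le a_n \le n+1$ for every $n$, so $0 \le a_n/(n+1) \le 1$, and the bound passes to the limit.

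Alternatively — and this is perhaps cleaner to write — I would argue directly without citing Fekete: fix any $\varepsilon>0$, pick $N$ with $a_N/(N+1)$ within $\varepsilon$ of $L \ed \inf_n a_n/(n+1)$, then for $n = q(N+1) + r$ with $0\le r \le N$ decompose $\llbracket 0,n\rrbracket$ into $q$ consecutive blocks of length $N+1$ plus a remainder; balancedness forces each block to contain either $a_N$ or $a_N \pm 1$ occurrences of $A$ (more precisely within $1$ of the ``ideal'' count), so $a_n \le q(a_N + 1) + (N+1)$, giving $\limsup_n a_n/(n+1) \le a_N/(N+1) + o(1) \le L + \varepsilon$; combined with $\liminf \ge L$ by definition of the infimum this forces convergence to $L \in [0,1]$.

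The only genuinely delicate point is the indexing: the defining display uses the half-open-flavoured window $s_{\llbracket 0,n\rrbracket}$ (which has $n+1$ symbols) for $a_n$ but writes the frequency as $|s_{0,n}|_A/(n+1)$, and balancedness is stated for windows $s_{\llbracket i, i+n\rrbracket}$; I would simply fix the convention at the start of the proof (all windows $\llbracket i, i+n\rrbracket$ have $n+1$ symbols) and check the one-line estimate $\bigl||s_{\llbracket m+1,m+n+1\rrbracket}|_A - |s_{\llbracket 0,n\rrbracket}|_A\bigr|\le 1$, which is precisely the $i = m+1$, $j = 0$ instance of the balanced condition. After that the argument is entirely mechanical; there is no substantive obstacle, which is consistent with the statement being labelled a ``quick remark''.
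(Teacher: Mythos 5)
Your argument is essentially correct, but be aware that the paper does not prove this proposition at all: it is stated as a classical fact about balanced (periodic or Sturmian) words, with the reference to Lothaire, Chapter~2, standing in for a proof. Your self-contained route via approximate subadditivity is sound: writing $a_n = |s_{\llbracket 0,n\rrbracket}|_A$, the decomposition $a_{m+n+1} = a_m + |s_{\llbracket m+1,m+n+1\rrbracket}|_A$ together with the balance condition at $i=m+1$, $j=0$ does give $a_{m+n+1}\le a_m+a_n+1$, hence $b_n \ed a_n+1$ is subadditive (using monotonicity of $a_n$ to absorb the index shift), Fekete applies, and the passage from $b_n/n$ to $a_n/(n+1)$ and the bound $f_A(s)\in[0,1]$ are routine. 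The one quantitative slip is in your ``alternative'' direct argument: an $N$ with $a_N/(N+1)\le L+\varepsilon$ need not be large (e.g.\ $N=0$ can already realise the infimum), so the $+1$ error per block contributes $1/(N+1)$, which is not automatically negligible, and it is not clear a priori that the infimum is approached along arbitrarily large $N$ without already invoking Fekete. The clean fix is to use balancedness in both directions: every length-$(N+1)$ block carries between $a_N-1$ and $a_N+1$ letters $A$, so $q(a_N-1)\le a_n\le q(a_N+1)+(N+1)$, whence $\limsup_n a_n/(n+1) - \liminf_n a_n/(n+1) \le 2/(N+1)$ for \emph{every} $N$; letting $N\to\infty$ gives convergence with no reference to the infimum and no appeal to Fekete. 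Either way the statement holds; your proof supplies a derivation the paper delegates to the literature.
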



These games correspond to \emph{Domino games with a budget}: from a budget $b_i$, $A$ plays $\lf b_i \rf$ moves, $B$ plays one move, and iterate with $b_{i+1} = b_i - \lf b_i \rf + \frac{f_A(s)}{1-f_A(s)}$.

\begin{theorem}
\label{thm:balanced}
Let $s$ be a balanced word. 

If $0<f_A(s)\leq\dem$, then $co\domino{\Z^d}\leq\nagame_s(\Z^d)$ (and the problem is undecidable if $d\geq 2$). Otherwise, $\nagame_s(\Z^d)$ is decidable. 
\end{theorem}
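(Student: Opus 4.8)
The plan is to treat the two regimes of the statement separately; in each case the first step is a structural observation about balanced words, after which the machinery of Section~\ref{sec:complexity} can be reused almost verbatim. The relevant facts are: a balanced word $s$ with $f_A(s)\le\dem$ has no factor $AA$ (if some length-$2$ window had two $A$'s, balance would forbid any window with two $B$'s, so $A$-blocks would be separated by single $B$'s, forcing $f_A(s)\ge\dem$, with equality only for $(AB)^\omega$, which has no $AA$); symmetrically a balanced word with $f_A(s)>\dem$ has no factor $BB$ and does contain $AA$, so --- balanced words being uniformly recurrent --- it has runs of $A$'s of length $\ge 2$ infinitely often, and $A$'s cumulative surplus of turns over $B$'s grows without bound; finally $f_A(s)=0$ forces $s=B^\omega$.

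For $0<f_A(s)\le\dem$ I would reuse the exact computable transformation $(\A,\F)\mapsto(\A',\F')$ from the proof of Proposition~\ref{prop:hard} and show it still reduces $co\domino{\Z^d}$ to $\nagame_s(\Z^d)$. If $X_\F=\emptyset$, the strategy of that proof for $A$ --- keep every uncoloured cell without interpretation, push $\cn$ tiles rightward, punish any deviation --- makes strictly monotone progress on each of $A$'s turns whatever $B$ does on its (possibly several, possibly passing) turns, and since $f_A(s)>0$ player $A$ has infinitely many turns, so $A$ wins in finite time. If $X_\F\ne\emptyset$, fix $x\in X_\F$; since $s$ has no $AA$, each $A$-turn is immediately followed by a $B$-turn, so $B$ plays the parity-restoring response of Proposition~\ref{prop:hard} right after each of $A$'s moves and passes on its remaining turns. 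The parity invariant is restored after every $A$-move and untouched by passes, so every tile keeps an admissible interpretation $x_i$ and the game runs forever: $B$ wins. Undecidability for $d\ge2$ follows since $co\domino{\Z^d}$ is $\Sigma^0_1$-hard there.

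For $f_A(s)=0$ we have $s=B^\omega$: $A$ never plays and $B$ may pass forever, so $A$ wins iff the initial position is already final, i.e.\ iff the empty pattern lies in $\F$ --- decidable. For $f_A(s)>\dem$ I claim $A$ wins iff $\F\ne\emptyset$ (if $\emptyset\in\F$ the game is won immediately; if $\F=\emptyset$ no pattern ever appears and $B$ wins; otherwise $A$ forces a copy of some nonempty $p\in\F$), which is a syntactic check, so the problem is decidable; the content is this last assertion. $A$ reserves infinitely many pairwise-far translates of $\supp(p)$ and measures the \emph{progress} of such a region as the number of cells already carrying the correct tile of $p$; $B$ can only spoil a region (place a wrong tile there --- placing a correct one, or completing any forbidden pattern anywhere, only helps $A$), and spoils at most one region per turn. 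Using the runs of $\ge2$ consecutive $A$-moves and the unbounded cumulative surplus, $A$ plays a ``reservoir'' strategy: rather than letting a region become the unique most-advanced one (a natural target for $B$), $A$ spreads its moves so as to keep, at every progress level, a supply of regions that grows over time; once two regions sit at progress $|\supp(p)|-1$ right after one of $A$'s turns, $B$ spoils only one and $A$ completes the other, creating $p$.

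I expect the reservoir argument to be the main obstacle: converting the \emph{cumulative} surplus into an actual win when the per-``round'' surplus may be arbitrarily small and $A$'s runs arbitrarily short relative to $|\supp(p)|$, so that ``complete $p$ inside one run'' fails, and one must bootstrap carefully across progress levels $1,\dots,|\supp(p)|$, balancing $B$'s one spoilage per turn against $A$'s amortised promotions. (If one prefers an effective formulation, this bootstrap can be packaged as a computable time bound, after which decidability also follows from the bounded-time analysis of Section~\ref{sec:boundedtime} and Proposition~\ref{cor:finitestring}.) The remaining cases are routine adaptations of Section~\ref{sec:complexity}.
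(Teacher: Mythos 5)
Your case decomposition matches the paper's in spirit, but two of your structural claims about balanced words are false and this hides where the real work lies. First, $f_A(s)\leq\dem$ does \emph{not} exclude the factor $AA$: the balanced word $(AB)^kA(AB)^\omega$ has frequency $\dem$ and one occurrence of $AA$ (and likewise $f_A(s)\leq\tiers$ allows one occurrence of $ABA$, and $f_A(s)=0$ allows one $A$, so that case is Proposition~\ref{cor:finitestring}, not just ``$s=B^\omega$''). The paper must treat these single-exception words separately, with larger alphabets ($\A^{11}$, resp.\ $\A^{15}$), a voting interpretation, and a computer check; your ``$B$ answers immediately after each $A$-move'' strategy breaks exactly at the one place where $A$ moves twice in a row. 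Second, and more seriously, the reduction of Proposition~\ref{prop:hard} does \emph{not} transfer verbatim to the whole range $0<f_A(s)\leq\dem$. Its mechanism is that each $\cn$ played by $A$ extorts exactly one answering arrow from $B$, so the resulting block has a \emph{unique} (hence wholly inadmissible) interpretation. When $f_A(s)\leq\tiers$, $B$ has at least two moves per $A$-move and can place arrows on both sides of a $\cn$, giving cells several interpretations; membership in $\F'$ requires \emph{all} interpretations to be forbidden, so $B$ can keep every window admissible even though $X_\F=\emptyset$. This is precisely why the paper switches, for $\tiers\geq f_A(s)>0$, to a majority-vote gadget over windows of size $9$ where every \emph{complete} colouring of a long interval is forbidden, and the hard direction becomes $B$'s combinatorial strategy to own a majority of each window (the $bab$/$baba$/$b(ba)^nb$ invariants). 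Even in the range $\tiers<f_A(s)\leq\dem$ the paper does not argue ``monotone progress'' but reuses the isolation induction to get areas where $A$ and $B$ played equally many tiles.

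On the case $f_A(s)>\dem$, your reservoir idea is the right instinct but you correctly identify it as unresolved, and it is the crux. The paper's resolution is a clean level-by-level induction: with $v_n^c(k)=c(2k+1)\frac{(k+1)^n-1}{k}$, player $A$ first builds $c(k+1)$ well-isolated copies of a length-$n$ prefix $w'$, then spends $c(2k+1)$ turns extending each copy by one tile; since $B$ plays at most $ck$ of those turns and each $B$-move spoils at most one copy, at least $c$ extended copies survive. Iterating from $n=0$ yields an isolated copy of any target pattern, so $A$ wins iff $\F\neq\emptyset$, giving decidability. You should adopt this exponential-supply bookkeeping rather than trying to maintain a reservoir at every progress level simultaneously. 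As written, the proposal proves neither the hardness claim below frequency $\tiers$ nor the decidability claim above $\dem$.
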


The rest of this section is devoted to proving this result case by case. For simplicity, we assume that all patterns in $\F$ are of support $\setn^d$.


\subsubsection{Case $f_A(s)=0$.}
There is at most one $A$ in $s$, so this follows from Proposition~\ref{cor:finitestring}.

\subsubsection{Case $f_A(s)>\dem$.}
$s$ contains infinitely many occurrences of $AA$. Since $s$ is balanced, there is a bound $k$ such that the distance between consecutive occurrences of $AA$ is at most $2k+1$. 


For $c\in\N$, fix $v_n^c(k)= c(2k+1)\frac{(k+1)^n-1}{k}$. We prove by induction on $n$ that for all pattern $w$ composed by $n$ cells and $\delta>0$, $A$ has a strategy such that, after $v_n^c(k)$ turns, there are $c$ occurrences of $w$ that are $\delta$-isolated, that is, at distance $\delta$ from each other and all other tiles. The case $n=0$ is trivial. 

Take $\delta \in \N$, $w$ composed by $n+1$ cells and $w'$ some subpattern of $w$ composed by $n$ cells. By induction hypothesis, there is a strategy $\Se$ so that, after $v_{n}^{c(k+1)}(k)$ turns, there are $c(k+1)$ occurrences of $w'$ that are $2\delta +3$-isolated. Consider the following strategy:
\begin{enumerate}
\item during the first $v_{n}^{c(k+1)}(k)$ turns, apply $\Se$.
\item during the next $c(2k+1)$ turns, when $A$ plays, $A$ completes each occurrence of $w'$ to an occurrence of $w$ if possible, and passes otherwise.
\end{enumerate}

Since $s$ is balanced, $B$ plays at most $ck$ moves during the second phase. Since $B$ cannot play at distance $\leq \delta$ of two occurrences in the same move, there are at least $c$ occurrences of $w$ that are still $\delta$-isolated. This strategy took $v_n^{c(k+1)}(k) + c(2k+1) = v_{n+1}^c(k)$ turns. This ends the induction.

$A$ wins as long as $\F\neq\emptyset$ (which is decidable) by applying this strategy on $w\in\F$, $c = 1$ and $\delta = 1$.

\subsubsection{Case $\dem\ge f_A(s)>\tiers$.}
Since $f_A(s) \leq \dem$ and $s$ is balanced, $s$ contains at most one occurrence of the pattern $AA$. For clarity, we begin with the case where this does not occur. For a SFT $(\A, \F)$, use the same reduction as in Proposition~\ref{prop:hard} (which corresponds to the case $s=(AB)^{\omega}$, with $f_A(s) = \dem$) to obtain $(\A', \F')$. We show that $A$ wins on $\Gamma_s(\A', \F', \Z^d)$ if and only if $X_{\F}=\emptyset$, which implies $co\domino{\Z^d} \leq \nagame_s(\Z^d)$.

When $X_{\F} \neq \emptyset$, $B$ wins on $\Gamma_s(\A', \F', \Z^d)$ by applying the strategy outlined in the proof of Proposition~\ref{prop:hard} during all turns for $B$ that come right after a turn for $A$, and passing on other turns. 

When $X_{\F}=\emptyset$, we show that, for all $m$, $A$ has a strategy to force the configuration to contain a pattern of length $m$ with a single interpretation. This strategy wins for $A$ for $m$ large enough, just as in the proof of Proposition~\ref{prop:hard}.

Since $\dem\ge f_A(s)>\tiers$, $BBB$ does not appear in $s$ but $ABA$ does, and the distance between consecutive occurrences of $ABA$ is at most $3k+2$ for some $k$. 

By using the same technique as in the case $f_A(s)>\dem$, $A$ forces the existence of $c$ isolated areas where $A$ played $n$ moves and $B$ played at most $n$ moves in time $c(3k+2)\frac{(k+1)^n-1}{k}$. By only playing tiles $\cn$, $A$ forces this pattern to have at most one interpretation, which ends the proof.

We left the case of turn order words with a single occurrence of $AA$, that is, $B^{\{0,1\}} (AB)^{\ast}A(AB)^{\omega}$. We only give a proof sketch as this case is more tedious.

Put $\A'\ed\A^{11}\cup\{\cn\}$ and $e_1 = (1,0,\dots,0)$. Given a pattern $p$ on $\A'$, the tile at cell $i$ "votes" for the interpretations of all tiles at cells $\{i+ke_1 : -5 \leq k \leq 5\}$ (a $\cn$ tile does not vote) in the sense that the interpretation of $i$ is the set of all colours that appear at least 4 times in the multiset $\{\pi_k p_{i+ke_1} :  -5 \leq k \leq 5,\ p_{i+ke_1}\neq \cn\}$. Again $\F'$ is the set of patterns $p'$ such that $\iota_{\llbracket -n-5, n+5\rrbracket^{d}}(p')\subset \F$.


If $X_{\F}=\emptyset$, $A$ wins by playing only $\cn$ and forcing a large pattern with a unique interpretation. Conversely, if $X_{\F}\neq\emptyset$, $B$ chooses some $x\in X_\F$ and is able to force the interpretation $x_i$ at every cell $i$, which we checked by computer enumeration of all local strategies for $A$.

\subsubsection{Case $\tiers \ge f_A(s)>0$.} Since $s$ is balanced and $f_A(s) \leq \tiers$, there is no occurrence of the pattern $AA$ and at most one occurrence of $ABA$. As above, we begin by the simpler case where there is no occurrence of $ABA$.

We reduce the codomino problem to the problem $\nagame_s(\Z^d)$. Let $(\A, \F)$ be a SFT. Let $\A'=\A^9 \cup \{ \cn \}$.

Again, we define another notion of interpretation. Given a pattern $p$ on $\A'$, each cell $i$ is interpreted by the majority colour (with some arbitrary tiebreaker) in the multiset $\{ \pi_k p_{i+ke_1} : -4 \leq k \leq 4,\ p_{i+ke_1}\neq \cn\}$. A tile have no interpretation if all tiles in the neighbourhood are $\cn$. Let $\F' \subseteq \A'^{\llbracket -n-4, n+4\rrbracket^d}$ be the set of patterns $p'$ such that $\iota_{\setn^d}\subset \F$.\medskip

If $X_{\F}=\emptyset$, there exists by compactness an $m$ such that no pattern in $\A^{\llbracket -m, m \rrbracket^d}$ is admissible. Therefore all patterns in $\A'^{\llbracket -m-4, m+4 \rrbracket^d}$ have a subpattern in $\mathcal F'$. $A$ wins by playing in $\llbracket -m-4, m+4\rrbracket^d$ until a pattern from $\mathcal F'$ is created; notice that $A$ plays infinitely often since $f_A(s)>0$.

If $X_{\F} \neq \emptyset$, take $x\in X_\F$. We describe a strategy for $B$ to play a majority of the tiles in $\{i+ke_1 : -4 \le k \le 4 \}$ for every cell $i$, so $B$ wins by choosing tiles such that each cell $i$ has interpretation $x_i$. To make this clearer, we mark by $a$ and $b$ the cells where $A$ and $B$ play, respectively, and we show that $B$ wins the game $\Gamma_s(\{a,b\}, \F_2, \Z^d)$ for $\F_2=\{w \in \{a,b\}^9 \ : \ |w|_a \ge 5\}$. $B$ uses the following strategy:
\begin{itemize}
\item if there is an uncoloured cell to the left or right of a tile $a$, $B$ plays $b$ there;
\item if there is an uncoloured cell to the left of a pattern $baba$, $B$ plays a $b$ there;
\item if there is an uncoloured cell to the right of a pattern $b(ba)^nb$ for $n \in \{3, 4\}$, $B$ plays $b$ there;
\item otherwise, $B$ passes.
\end{itemize}

We can prove that the following invariants hold on every line before $A$ plays:

\begin{enumerate}
    \item Every $a$ is in a pattern $bab$.
    \item Every $aba$ is in a pattern $bbaba$ or $b(ba)^nbb$ for $n \in \{3,4\}$.
\end{enumerate}

After a move by $A$, $B$ restores the invariants in two moves with this strategy. The pattern $aa$ cannot appear by the first invariant. The only other problematic pattern from $\F_2$ is $ababababa$, which violates the second invariant.

We left the case of words with a single occurrence of $ABA$. The same reduction works, using a neighbourhood of size $15$ and $\A'=\A^{15} \cup \{ \cn \}$ rather than 9. $B$ wins the game $\Gamma_s(\{a,b\}, \F_3, \Z^d)$ for $\F_3=\{w \in \A_3^{15} : |w|_A \ge 8\}$ using a similar strategy. We check with a computer enumeration that the pattern $abababababababa$ cannot occur.

\section{Remarks and open questions}

\subsubsection{Complexity of winning strategies.} By Corollary~\ref{cor:gamevalue}, given a game $\Gamma(\A, \F, E)$ winning for $A$, there is a computable winning strategy for $A$. However, the same is not true for $B$.

Take a nonempty SFT whose configurations are all uncomputable \cite{myers}. Apply the reduction for Proposition~\ref{prop:hard} to get a game $\Gamma(\A',\F',\Z^d)$ where $B$ has a winning strategy. $A$ can apply the (computable) strategy provided in the same proof so that $B$ avoids losing only if arbitrarily large admissible patterns are constructed; that is, we compute some $x\in X_\F$ from any winning strategy of $B$. Therefore $A$ has a computable strategy which is not winning, but beats every computable strategy for $B$.

\subsubsection{Variant without pass and Zugzwang.} We consider a variant $\Gamma^{\ast}(\A, \F, E)$ where players are not allowed to pass. Proposition~\ref{prop:hard} holds in this variant as the proof does not require any player to pass, so the problem remains undecidable on $\Z^d$ for $d \ge 2$. However, the proof of Proposition~\ref{prop:Sigma1} requires $B$ to pass.
\begin{question}
Is the Domino game problem without passes $\Sigma^0_1$ (recursively enumerable)?
\end{question}

$A$ does not benefit from passing, so any winning strategy for $A$ in $\Gamma(\A, \F, E)$ also wins for $\Gamma^{\ast}(\A, \F, E)$. When $\A = \{0,1\}$ and $\F = \{000, 111\}$, the position $(\emptyset, B)$ is \emph{Zugzwang}: $B$ loses in $\Gamma^\ast$ and wins in $\Gamma$ by passing. However, can the winner depend on the variant in the starting position $(\emptyset, A)$?

\begin{question} 
Is there a SFT $(\A, \F)$ such that $A$ wins in $\Gamma^{\ast}(\A, \F, E)$ and loses in $\Gamma(\A, \F, E)$?
\end{question}

\begin{conjecture}
\label{conj2}
Let $\A_n=\{0, \dots, n\}$ and $\F_n= \{\mbox{palindromes of length}\ 2n+1\} \cup \{iii \ : \ i \in \A_n\}$. 
$\Gamma^{\ast}(\A_n, \F_n, \Z)$ is winning for $A$. We conjecture that $\Gamma(\A_n, \F_n, \Z)$ is winning for $B$ for $n$ large enough.
\end{conjecture}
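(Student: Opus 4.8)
The two halves of the statement require unrelated arguments, since the underlying phenomenon --- that a forced move can hurt the mover (Zugzwang) --- helps $A$ in $\Gamma^{\ast}$ and hurts $A$ in $\Gamma$. For the first half, that $A$ wins $\Gamma^{\ast}(\A_n,\F_n,\Z)$, the key is that $B$ may never pass, so every tile $B$ plays must be locally safe, yet $B$ cannot keep all of its obligatory tiles safe while staying compatible with any $\F_n$-free target when $A$ plays adversarially. Concretely, $A$ should apply two kinds of pressure: forcing a monochromatic factor $aa$ with both neighbours uncoloured (a two-sided $iii$ threat), and filling a length-$(2n+1)$ window so that its two \emph{wings} are palindromic while its centre is empty --- this last being an \emph{unblockable} winning threat, because the centre is free in a palindrome, so any colour placed there, even by $B$, completes the pattern. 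The technical heart is to show that $A$ can maintain enough simultaneous near-completions of these two types that $B$'s one-tile-per-turn, never-skipped moves cannot destroy all of them; I would organise this as a threat argument around a growing contiguous block, with a potential counting the controlled near-completions minus the block length. For $n=1$, where $\F_1$ is every factor $xyx$, it is immediate: $A$ plays $(0,0)$, then $0$ two cells away on the side $B$ did not cover, and wins on the free middle cell.

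For the conjecture, that $B$ wins $\Gamma(\A_n,\F_n,\Z)$ once $n$ is large, the point is that $B$ may now pass, which is exactly what removes the Zugzwang above: $B$ should never place a useless tile. I would (i) fix a reference configuration $x\in X_{\F_n}$ that is robustly non-palindromic --- over the alphabet $\{0,\dots,n\}$, colour each parity class periodically by $0,1,\dots,n$, so that in every length-$(2n+1)$ window every pair of cells symmetric about the centre carries two different colours (one checks this $x$ avoids $iii$ and every palindrome of length $2n+1$, so $X_{\F_n}\neq\emptyset$); (ii) have $B$ react --- on $A$'s move $(i,a)$, $B$ passes unless the move is an \emph{imminent threat} (one further $A$-move would complete a pattern of $\F_n$), and otherwise $B$ plays a single tile, equal to $x$ where possible, that kills every imminent threat created by that move; (iii) maintain, before each $A$-move, the invariant that every length-$(2n+1)$ window either still has two uncoloured cells or already contains two symmetric cells of different colours, and that no factor $aa$ has uncoloured cells on both sides. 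This invariant rules out both kinds of winning threat from the first half, so $A$ never wins; and passing is what keeps the number of $B$-tiles small enough that a single reactive tile suffices.

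The hard step --- and the reason the second half stays a conjecture --- is (iii): one must prove that a single reactive $B$-tile always kills \emph{every} imminent threat created by one $A$-move. An $A$-move lies in $2n+1$ overlapping length-$(2n+1)$ windows, and several of them could in principle each be one cell short of a palindrome while demanding incompatible colours of $B$'s answer. The reason to believe this happens only for small $n$ is sparsity: among the $(n+1)^{2n+1}$ words of length $2n+1$ only $(n+1)^{n+1}$ are palindromes, so after $A$'s move an endangered window should still be missing two cells or already contain a mismatched symmetric pair, leaving $B$ a consistent free choice --- but converting this heuristic into a proof with an explicit threshold on $n$ is precisely what is missing. That some threshold is genuinely needed is witnessed by $n=1$, where the argument above shows $A$ wins $\Gamma$ as well.
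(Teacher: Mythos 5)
Your treatment of the provable half (that $A$ wins $\Gamma^{\ast}(\A_n,\F_n,\Z)$) has a genuine gap for general $n$: you correctly identify the two local winning threats (a factor $aa$ with both neighbours free, and a completed palindromic wing pair with an empty centre) and the role of Zugzwang, but the ``technical heart'' --- maintaining enough simultaneous near-completions via a potential function --- is exactly the part you leave unproved, and it is not clear it closes: the palindromic-wings threat requires $A$ to coordinate up to $2n$ tiles while $B$ answers each one with a mismatched colour at a symmetric position. Only your $n=1$ case is complete. The paper's argument is much simpler and avoids this entirely by a mirroring strategy: $A$ opens with $(0,0)$ and thereafter, when $B$ plays colour $a$ at $k>0$ adjacent to the occupied block (i.e.\ $k-1$ coloured), $A$ replies $(-k,a)$, so that the coloured cells always form a palindrome centred at $0$; since $B$ cannot pass, $B$ must either keep extending this palindrome until it reaches the forbidden length $2n+1$, or play a detached tile $(k,a)$ with $k-1$ uncoloured (and then $k+2$ uncoloured as well, since this is the first such move), which $A$ punishes with $(k+1,a)$, creating precisely the two-sided $aaa$ threat you describe. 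The decisive idea you are missing is that $A$ \emph{copies $B$'s colours} rather than trying to assemble palindromic wings on its own; this turns the palindrome constraint into a trap sprung by $B$'s forced moves instead of a construction task for $A$.

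For the second half the paper offers no proof --- it is stated as a conjecture --- so your reactive, pass-based strategy for $B$, the explicit witness $x\in X_{\F_n}$ obtained by colouring each parity class periodically with $0,\dots,n$ (which indeed kills $iii$ and makes every symmetric pair in every length-$(2n+1)$ window mismatched), and your honest identification of the unproved step are all appropriate. Your observation that $A$ wins $\Gamma(\A_1,\F_1,\Z)$ even with passes correctly explains why the conjecture must be restricted to $n$ large enough, and is consistent with the paper's remark that $A$ still wins $\Gamma(\A_5,\F_5,\Z)$.
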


$A$ has a simple winning strategy for all games $\Gamma^{\ast}(\A_n, \F_n, \Z)$ that we outline below. $A$ also has a winning strategy for $\Gamma(\A_5, \F_5, \Z)$ which is much more complicated and we do not think such strategies exist for all $n$.
\begin{enumerate}
	\item On the empty position, play $(0,0)$.
	\item If $B$ plays $(k,a)$ for some $k>0$ (the other case is symmetric),
    \begin{enumerate}
       \item if $k-1$ contains a tile, play $(-k, a)$.
       \item otherwise, play $(k+1, a)$. Next turn, play either $(k-1,a)$ or $(k+2,a)$.
    \end{enumerate}
\end{enumerate}
If case 2(b) never occurs, then $B$ and $A$ fill progressively $\llbracket -n, n \rrbracket$ with a palindrome. Otherwise, the first time 2(b) occurs, the cell $k+2$ must be uncoloured (otherwise 2(b) would have occured earlier), so $A$ makes a pattern $aaa$. 

\begin{question}
\label{conj3}
Is there a SFT $(\A, \F)$ such that $\Gamma^{\ast}(\A, \F, \Z^d)$ is winning for $A$ with an infinite game value?
\end{question}
For such an SFT, $A$ would have a winning strategy, but for all $t \in \N$, $B$ would have a strategy $\Se_T$ to not lose before time $T$. This cannot happen for $\Gamma$ by Corollary~\ref{cor:gamevalue}, so this would also answer Question~2. It may be also the case that $\Gamma^{\ast}(\A, \F, \Z^d)$ has some countable ordinal larger than $\omega$ as a game value.

\subsubsection{Complexity for $d=1$.}  Since $\domino{\Z}$ is decidable, Proposition~\ref{prop:hard} says nothing for this case. The variant studied in \cite{salo}, where players must play at prescribed positions, is decidable on $\Z$. The fact that, in our variant, players are allowed to play arbitrarily far from other tiles makes this case more challenging.
\begin{conjecture}
The Domino game problem is decidable on $\Z$.
\end{conjecture}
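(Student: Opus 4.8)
The plan is to reduce $\dgame{\Z}$ to a bounded-time problem. By Proposition~\ref{prop:Sigma1} the problem is semi-decidable, so it is enough to bound the running time; concretely, the plan is to prove that there is a computable function $g(\A,\F)$ such that, whenever $A$ wins $\Gamma(\A,\F,\Z)$, $A$ already wins within $g(\A,\F)$ moves. Given this, $\dgame{\Z}$ is decidable by running the procedure of Theorem~\ref{thm:boundedtime} on $(\A,\F,\Z)$ with $T=g(\A,\F)$. We first dispose of the case $X_\F=\emptyset$, which is decidable since it is the complement of $\domino{\Z}$: by compactness there is a computable $m_0$ such that every word of $\A^{m_0}$ contains a subpattern of $\F$, and $A$ then wins within $O(m_0)$ moves simply by filling the cells $\llbracket 0,m_0-1\rrbracket$. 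So assume $X_\F\neq\emptyset$; recall that $X_\F$ then contains a periodic configuration whose period is at most the number of admissible words of length $n$, where $\F\subseteq\A^{\llbracket 0,n\rrbracket}$.

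The heart of the plan is to replace $\Gamma(\A,\F,\Z)$ by an equivalent game on a computable \emph{finite} state space. Since $d=1$, the admissible words of length $n$ form the vertex set $\mathcal V$ of a finite graph $G$ (one edge per admissible word of length $n+1$), and any non-final position decomposes uniquely into maximal blocks of coloured cells --- each an admissible word, hence a walk in $G$ once long enough --- separated by gaps of empty cells. One abstracts such a position by recording, for each block, its \emph{boundary type} (the block itself if it is shorter than $n$; otherwise the pair of its two extreme length-$n$ subwords, seen as vertices of $G$), and, for each gap, its exact length if it is less than $n$ and the symbol $\textsc{wide}$ otherwise. One should then verify that whether $B$ wins depends only on this abstract position: a move played in a wide gap at distance more than $n$ from every block can neither create an $\F$-pattern nor merge blocks; the ways $A$ can complete or merge blocks across a narrow gap depend only on the boundary types and the exact gap length; and whether a long block extends to a bi-infinite configuration of $X_\F$ --- which is exactly what keeps $B$ from being trapped inside it --- depends only on its two boundary vertices, by decidability of reachability in $G$.

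The remaining and principal difficulty is that this abstract state space is finite \emph{apart from} the number of blocks, which is a priori unbounded because $A$ may scatter isolated tiles arbitrarily far apart. The plan is to show that $B$ never needs to track more than a computably bounded number of blocks; discarding the others turns the abstract game into a genuinely finite game, whose winner is decidable and in which, if $A$ wins, a winning $A$-strategy lifts to one for $\Gamma(\A,\F,\Z)$ that uses a computably bounded number of moves --- the desired bound $g(\A,\F)$. Two ingredients should provide the block bound. First, \emph{localisation of winning combinations}: a move of $A$ affects only the radius-$n$ window around it, so any single move creating two simultaneous ``$B$ must reply at once'' threats does so within a window of size $O(n)$, and more generally $A$'s entire winning combination --- being propagated only through interactions of radius $n$ --- can be confined to a window whose size is controlled by the number of moves it uses. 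Second, \emph{neutralisation of isolated blocks}: an extendable block with $\textsc{wide}$ gaps on both sides may be dropped from the abstract position, since $B$ can ignore it (passing, or playing far away) until $A$ ``reactivates'' it by playing within distance $n$; an amortised or potential argument should then bound the number of simultaneously active blocks, using that $A$ reactivates at most $O(1)$ of them per move while an active block that $B$ never attends to cannot become final without a further move of $A$, which --- together with the first ingredient --- would instead exhibit a small window in which $A$ wins quickly.

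I expect the real work to lie in making this bookkeeping rigorous: a clean notion of ``active'' block, the correct potential function, and the verification that $B$'s strategy for the reduced abstract game genuinely lifts to $\Gamma(\A,\F,\Z)$. The contrast with $d\geq 2$ is precisely that there the ambient SFT has no such finite description --- the Domino problem being itself undecidable --- so no localisation of this kind can hold; and already the minimal instance $\A=\{0,1\}$, $\F=\{00\}$ exhibits the mechanism, $A$'s only winning resource being the local fork in $\{-1,0,1\}$ after playing $(0,0)$.
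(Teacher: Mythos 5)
This statement is a \emph{conjecture} in the paper: the authors give no proof, and in the surrounding discussion they identify as the obstacle exactly the point your sketch leaves open, namely that players may play arbitrarily far from all existing tiles. Your framework is reasonable and is surely the natural line of attack: membership in $\Sigma^0_1$ is Proposition~\ref{prop:Sigma1}, the case $X_\F=\emptyset$ is handled correctly, and if you could justify the finite abstraction (boundary types, capped gap lengths, boundedly many blocks) you would indeed obtain decidability, e.g.\ via Theorem~\ref{thm:boundedtime} applied with a computable bound $g(\A,\F)$. But the two ``ingredients'' that are supposed to deliver the block bound are not proofs, and each has a concrete defect. The first, \emph{localisation of winning combinations}, is circular as stated: you bound the window of $A$'s winning combination in terms of the number of moves it uses, but a computable bound on that number of moves is precisely the quantity $g(\A,\F)$ you set out to construct (Corollary~\ref{cor:gamevalue} gives finiteness of the game value, not a computable bound). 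The second, \emph{neutralisation of isolated blocks}, is the actual content of the conjecture and is missing: to discard a block from the abstract position you must show that $B$'s defence of it is purely \emph{reactive} --- one answering move after any $A$-move within distance $n$ of it restores a stable local state --- so that the global game decomposes as a sum of locally stable games. Nothing in your sketch rules out that defending some block type costs $B$ occasional proactive tempi, or that $A$ accumulates latent threats across many far-apart blocks and later converts them faster than $B$ can answer; this threat-accumulation phenomenon is exactly what makes $6$-in-a-row open, and your potential function is not exhibited.

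A further conflation needs repair: you treat ``the block's boundary word extends to a bi-infinite configuration of $X_\F$'' (a reachability question in the finite graph $G$) as ``what keeps $B$ from being trapped inside it.'' Extendability is necessary for $B$ not to be already lost (otherwise $A$ wins by filling a window around the block, as in the compactness argument), but it is not sufficient: when the block is extended, $A$ also gets to choose cells and colours, so $B$ must be able to \emph{steer} every extension into an admissible one while answering $A$'s moves. Whether $B$ can do this for a given boundary type is itself a game --- plausibly a decidable one on $G$, but only once the multi-block interaction above is controlled, which reintroduces the circularity. In short, you have a credible programme whose two load-bearing lemmas are unproven, and they are unproven at precisely the place where the authors stopped and wrote ``conjecture.''
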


\subsubsection{Complexity in bounded space.}
Consider the Domino game problem for a finite subset whose size is given as input:
\decproblem{An integer $n$ given in unary and a SFT $(\A, \F)$ on $\Z^d$.}{Does $A$ have a winning strategy for the game $\Gamma(\A, \F, \setn^d)$?}

A brute-force algorithm solves this problem in polynomial space. The corresponding Domino problem on $\setn^d$ is known to be $NP$-complete (see the seed-free variant of the problem $TILING(n,n)$ in \cite{complexity}), so this problem can be shown to be $NP$-hard by using the same reduction as for Proposition~\ref{prop:hard}. We conjecture that the Domino game problem is strictly harder than the Domino problem in the finite case, similarly as for other variants \cite{chlebus}:

\begin{conjecture}
The finite Domino game problem is $PSPACE$-complete.
\end{conjecture}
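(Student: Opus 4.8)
Towards this conjecture, a natural plan is to establish the two containments separately. Membership in $PSPACE$ is the brute-force algorithm alluded to above, once the relevant game depth is bounded. On the finite board $\setn^d$ every position winning for $A$ has a \emph{finite} value: the supremum defining $v(p,B)$ and the minimum defining $v(p,A)$ range over the finitely many legal moves, so all values are natural numbers by induction. Moreover $A$ never needs to pass, since if the minimum defining $v(p,A)$ were attained by a pass we would get $v(p,A)=v(p,B)+1\geq v(p,A)+2$ (using $v(p,B)\geq v(p,A)+1$, witnessed by $B$ passing), a contradiction. Hence if $A$ wins he has a winning strategy that colours a fresh cell on each of his turns; played against any opponent, it fills the board after at most $(2n+1)^d$ of $A$'s turns, so a final position must occur by turn $2(2n+1)^d$. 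Consequently $A$ wins $\Gamma(\A,\F,\setn^d)$ if and only if he wins the game truncated after $2(2n+1)^d$ turns (with $B$ declared the winner if no final position has occurred by then; cf.\ Theorem~\ref{thm:boundedtime} with the polynomial parameter $T=2(2n+1)^d$), and this truncated game has a finitely branching game tree of polynomial depth over positions of polynomial size, so it is solved by a depth-first minimax search in polynomial space.

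The hard part is $PSPACE$-hardness, for $d\geq 2$. The plan is to reduce from \textsc{tqbf}: given $\Phi=\exists x_1\,\forall x_2\,\exists x_3\cdots\forall x_k\,\psi(x_1,\dots,x_k)$ with alternating quantifiers (pad with dummies otherwise), build an SFT $(\A,\F)$ and a square $\setn^2$ with $n$ polynomial in $|\Phi|$ so that $A$ --- who moves first and plays the existential player --- wins $\Gamma(\A,\F,\setn^2)$ if and only if $\Phi$ is true. The board would carry a \emph{variable strip} of $k$ designated columns and a \emph{circuit region} hard-coding $\psi$, using the standard tiling encodings of Boolean formulas (as for the positional tiling games of Chlebus~\cite{chlebus}; see also~\cite{complexity}). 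The intended protocol is that on turn $i$ the player to move writes a truth value in column $i$; once all $k$ columns carry well-formed symbols, local rules force a deterministic evaluation cascade through the circuit region, and the \emph{only} patterns in $\F$ are those witnessing that the cascade outputs ``true''. Thus $A$'s only genuine way to win is to force the evaluation to output true, matching the semantics of $\Phi$; an off-protocol tile by $A$ is never in $\F$ and merely wastes a move, while the local rules of the circuit region would be designed so that a deviation by $B$ --- an ill-formed symbol, a tile outside the next column, or a tile inside the circuit region before the strip is complete --- either cannot prevent $A$ from later forcing ``true'' or directly enables it.

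The main obstacle is precisely that in our game nothing forces a player to move at a prescribed cell, and nothing forbids passing, so the whole difficulty lies in designing the local rules so that off-protocol play is provably never advantageous for $B$ and never wrongly advantageous for $A$. Concretely, one must (i) rule out stalling strategies for $B$ that interleave protocol moves with passes or with harmless-looking tiles which nonetheless perturb the evaluation cascade; (ii) prevent $B$ from ``stealing'' an existential column, and prevent $A$ from being \emph{forced} to commit a universal column, when the other player declines to play there; and (iii) ensure that when $\Phi$ is false, $A$ cannot exploit $B$'s freedom of placement to manufacture a target pattern anyway. This is the same kind of delicate protocol enforcement that the reduction of Proposition~\ref{prop:hard} and the computer-verified local rule sets in the proof of Theorem~\ref{thm:balanced} required, and I expect the final argument to rest on a phase-by-phase case analysis of $B$'s possible moves, plausibly assisted by a machine check of the rule set; the absence of such a verified construction is exactly why the statement is, for now, only conjectured.
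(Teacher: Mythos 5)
This statement is left as a conjecture in the paper: the surrounding remark only establishes membership in $PSPACE$ (by brute force) and $NP$-hardness (via the reduction of Proposition~\ref{prop:hard} applied to the $NP$-complete finite Domino problem), and explicitly does not prove $PSPACE$-hardness. Your proposal is in the same position. Its first half is a correct and in fact more careful justification of the membership claim than the paper's one-liner: the observation that $A$ never benefits from passing (since a pass by $A$ answered by a pass by $B$ would force $v(p,A)\geq v(p,A)+2$), hence that a winning $A$ fills a fresh cell each turn and the game can be truncated at $2(2n+1)^d$ moves, is exactly what is needed to make the ``brute-force in polynomial space'' assertion rigorous, because without such a bound the game tree has infinite plays through mutual passing. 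You do not, however, use the paper's $NP$-hardness lower bound at all.

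The genuine gap is the $PSPACE$-hardness direction, and you name it yourself: in a \textsc{tqbf} reduction one must force the players to take turns assigning variables, but in this game nothing compels $B$ to play in the designated column, to play a well-formed symbol, or to play at all. The Chlebus-style constructions you invoke rely precisely on positional play (each turn's cell is prescribed), which is the one feature this game lacks; Proposition~\ref{prop:hard} shows how to coerce $B$ locally in the $(AB)^\omega$ setting, but that machinery enforces a single global tiling, not an alternation of quantifier choices, and no argument is given that the coercion can be made to respect quantifier semantics (in particular item (iii) of your list --- that a false $\Phi$ cannot be ``won'' by $A$ exploiting $B$'s off-protocol freedom --- is exactly the direction that typically breaks in such attempts). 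Since no concrete rule set is exhibited, let alone verified, the hardness half remains a plan rather than a proof; your closing sentence acknowledges this, and it is consistent with the paper's decision to state the result only as a conjecture.
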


\subsubsection{Domino games on groups.} SFT can be defined on other finitely generated groups $G$, and we can play the Domino game on $G$ as well. Proposition~\ref{prop:Sigma1} holds by considering the Domino game problem on the central ball of radius $n$ of the Cayley graph, assuming the word problem on $G$ is decidable. The first part of Proposition~\ref{prop:hard} holds ($co\domino{G} \leq \dgame{G}$) if $G$ has an element with infinite order, so that every cell belongs to a copy of $\Z$.


\subsubsection{Non-balanced turn order.} This case seems more combinatorial and difficult. Arbitrary infinite words do not have densities, but even when they do, they are not sufficient to determine the decidability status. As long as a turn order word $s$ contains occurrences of $A^n$ for all $n$, the Domino game is always winning for $A$, and this can happen for any density $f_A(s)$. Therefore, for $f_A(s)\leq \frac 12$, the Domino game problem with turn order $s$ may be decidable or undecidable. We conjecture that the first part of the proof of Theorem~\ref{thm:balanced} can be adapted to show that:

\begin{conjecture}
The Domino game is always winning for $A$ when the turn order word $s$ satisfies $f_A(s)>\frac 12$.
\end{conjecture}

\section*{Acknowledgements}
The authors received financial support the ANR-22-CE40-0011 project Inside Zero Entropy Systems.

\bibliographystyle{splncs04}
\bibliography{citations}
\end{document}